\documentclass[12pt,a4paper]{article}

\makeatletter

\renewcommand\section{\@startsection
 {section}{1}{0pt}
  {-\baselineskip}
  {0.5\baselineskip}
{\raggedright\reset@font\normalsize\bfseries\mathversion{bold}}}

\renewcommand{\subsection}{\@startsection
{subsection}{2}{0mm}
  {-\baselineskip}
{0.5\baselineskip}
{\normalfont\normalsize\itshape}}

\renewcommand{\subsubsection}{\@startsection
{subsubsection}{3}{0mm}
  {-\baselineskip}
  {0.3\baselineskip}
{\normalfont\normalsize\itshape}}

\long\def\@makefntext#1{%
  \noindent \hb@xt@ 0.2em{\hss \@makefnmark}\hskip0.5em\relax#1}%

\makeatother

\usepackage{lmodern}
\usepackage{amsmath,amssymb,amsthm}
\usepackage{mathrsfs}
\usepackage[USenglish,british,american,australian]{babel}
\usepackage[ruled,linesnumbered]{algorithm2e}
\usepackage{float}
\usepackage{wrapfig}
\usepackage{enumerate}
\usepackage{graphicx,color}
\usepackage{natbib}
\bibpunct[:]{(}{)}{,}{a}{}{,}
\usepackage{setspace}
\usepackage[margin=1in]{geometry}
\usepackage{soul}
\usepackage[title]{appendix}

\newcommand{\Var}{\mathrm{Var}}

\newcommand{\E}{\mathbb E}

\theoremstyle{plain}
\newtheorem{theorem}{Theorem}

\newtheorem{lemma}{Lemma}
\newtheorem{proposition}{Proposition}

\theoremstyle{definition}
\newtheorem{definition}{Definition}
\newtheorem{example}[theorem]{Example}

\newtheorem{assumption}[theorem]{Assumption}

\newcounter{num}

\begin{document}
\bibliographystyle{plainnat}

\title{Conference Design for Consensus under KL Acceptance}
\author{
Liping Tang\footnote{Department of Philosophy, Institute of Logic and Cognition, Sun Yat-sen University, Guangzhou, 510275, P.R.China}
\quad Michiko Ogaku\footnote{Faculty of Economics, Nagasaki University, 4-2-1 Katafuchi, Nagasaki, 850-8506 Japan}
}
\date{
}
\maketitle


\begin{abstract}
We study consensus formation and information acquisition in conferences where DeGroot-type opinion updating is constrained by an acceptance rule based on Kullback–Leibler divergence. Players choose private information experiments before deliberation, and these choices determine both their posterior distributions and acceptance tolerances. We establish sufficient conditions for global consensus under specified conference structures, including repeated bilateral conferences organized by a fully receptive mediator. Under full reimbursement of the mediator’s conference costs, the information-acquisition game has a weighted potential and admits a pure-strategy Nash equilibrium. We develop a numerical procedure using verified payoff bounds over the continuous posterior domain to certify approximate equilibria and assess partial reimbursement policies that preserve a given equilibrium profile. The binary state, three-player examples illustrate how conference order interacts with state variance. At larger variances in the tested range, placing the demanding nonmediator last is associated with lower conference costs and lower certified sufficient planner reimbursement, but greater information acquisition expenditure by the mediator. These findings reveal a trade-off between the planner’s fiscal expenditure and the mediator’s private burden, and show that the conference structure matters economically.
\end{abstract}

\noindent
{\bf Keywords}: 
{Consensus formation, conference design, rational inattention, }

\noindent
{\bf JEL}: D83, C72, D85, D82.\\

\section{Introduction}
Communities often need to take collective action under uncertainty, for example, to manage a common resource or protect a shared environment, but their members may be unwilling to incorporate opinions that differ substantially from their own.
We study dynamic opinion formation in which player $i$, with belief $p_i$, accepts another belief $q$ only if 
\[D_{\mathrm{KL}}(q || p_i)\leq\eta_i,\] 
where $\eta_i>0$ may vary across players. Although an unconstrained DeGroot process on a complete network reaches consensus, the corresponding process with KL acceptance may become trapped in several opinion clusters because the realized influence network is endogenous to the evolving beliefs.

A planner commits to a signal-contingent conference protocol $\Gamma$ specifying conference membership, order, and stopping. We show that if one naive listener accepts every opinion throughout the continuation and all tolerances are positive, then a finite sequence of conferences achieves exact global consensus. The protocol repeatedly convenes the naive listener with each other player until the listener's belief is sufficiently close for that player to accept it. 
Complete cycles generate a contracting sequence of the opinion diameter. 
Once all beliefs lie in their acceptance regions, a final all-player conference produces exact consensus. 

We then endogenize acceptance tolerances. Before the state is realized, each player chooses an information strategy determining acceptance tolerance $\eta_i$. A smaller $\eta_i$ corresponds to a more informative, more costly experiment and a narrower acceptance set; each player incurs information cost measured by Shannon mutual information. The protocol again consists of bilateral conferences followed by a final all-player conference. Where a naive listener is a mediator who moves its own belief into the target player's acceptance region before both adopt their midpoint. 
Each conference period incurs a conference time cost. The mediator bears this cost. 
Before information acquisition, the planner commits to a reimbursement rule to deter the mediator's loads. 

At consensus, the community chooses an action $y$, and each player $i$ receives decision payoff 
$-r_i(y-X)^2$, where $X$ is the payoff-relevant state value and  $r_i>0$.  
The resulting dynamic game combines free riding in information acquisition with endogenous influence and delay. The planner designs $\Gamma$ to balance decision accuracy, information-acquisition incentives, and the time required for consensus.

\paragraph{Related work.} ~
DeGroot-type learning was developed as a framework for analysing the role of social influence in consensus formation. Previous studies have identified conditions under which consensus emerges \citep{degroot1974reaching}, conditions ensuring the accuracy of the resulting consensus \citep{golub2010naive}, and conditions for consensus when agents accept others’ opinions only within certain confidence bounds \citep{rainer2002opinion}.
Our study belongs to the literature on the model of \cite{rainer2002opinion}. 
A distinctive feature of our approach is an acceptance rule based on Kullback–Leibler (KL) divergence.

In cooperative game theory, conference structures and communication situations have been introduced, where meetings and communication opportunities among players are represented through hypergraphs\citep{myerson1980conference, van1992allocation}. The Myerson value and the position value have been proposed and characterized as fair allocation rules in these settings. These studies, however, address allocation problems under a given communication situation. The current study sheds light on the impact of exogenous conference structures on opinion dynamics.

Related literature studies information acquisition and social learning. \cite{risco2025network} examines costly private information acquisition before DeGroot-type social learning, while \cite{porcher2025migration} combines dynamic rational inattention with local log-linear belief aggregation in a migration model. Our study introduces conference protocol and endogenous acceptance tolerance into a static rational-inattention framework and examines their impacts on decision accuracy, information acquisition incentives, and the conference costs for coordination. 

\paragraph{Our contribution and paper organization.} ~
Our contribution is threefold. First, we identify sufficient conditions 
 for consensus when players are unwilling to incorporate opinions that differ substantially from their own, and when exogenously specified conference structures are introduced. Second, we show the existence of a Nash equilibrium for players' information strategy in this situation. 
Third, we provide numerical examples that suggest the bilateral conference cycle that places the demanding (narrow acceptance range) player last 
could be the most economical choice for the planner but puts a larger burden on the mediator to acquire information itself. 

The organization of the paper is as follows.   
Section \ref{sec:model} sets up the model and shows the sufficient condition for consensus. Section \ref{sec:information-acquisition} introduces the information acquisition game and the endogenous tolerance determined there.   Section \ref{sec:existence} shows existence of a pure-strategy Nash equilibrium for the information acquisition game. 
Section \ref{sec:computation} presents a procedure for computing and certifying approximate equilibria. 
Section \ref{sec:numerical-e} presents numerical examples, and  Section \ref{sec:future} considers future work and concludes. Proofs are provided in the Appendix.

\section{Model}\label{sec:model}
\paragraph{Preliminaries.} ~
We write \([n]=\{1,\ldots,n\}\) for the set of players. Player \(i\in[n]\) has belief \(p_i(t)\) at time \(t\)
and acceptance tolerance \(\eta_i>0\). Beliefs are defined on a measurable state space
\((\Omega,\mathcal F)\) and compared on a finite measurable
partition \(\mathcal E=\{E_1,\ldots,E_\tau\}\) of \(\Omega\). Initial beliefs have full support on \(\mathcal E\).

\paragraph{KL acceptance.} ~
Player \(i\), with tolerance \(\eta_i>0\), accepts an opinion \(q\) if and only if
\[
D_{\mathrm{KL},\mathcal E}(q\|p_i(t))
:=\sum_{k=1}^{\tau}q(E_k)
\log\frac{q(E_k)}{p_i(t)(E_k)}
\leq\eta_i.
\]
Thus, player \(i\)’s acceptance neighborhood at time $t$ is given by 
\[
N_i(t)
:=\bigl\{j\in[n]:
D_{\mathrm{KL},\mathcal E}(p_j(t)\|p_i(t))
\leq\eta_i\bigr\}.
\]
During conference \(C(t)\) at time $t$, participants average all members'
opinions, replacing each rejected opinion with their own belief:
\[
p_i(t+1)=
\begin{cases}
\displaystyle
p_i(t)+\frac{1}{|C(t)|}
\sum_{j\in C(t)\cap N_i(t)}
\bigl(p_j(t)-p_i(t)\bigr),
& i\in C(t),\\[3mm]
p_i(t),&i\notin C(t).
\end{cases}
\]
Nonparticipants keep their beliefs.

\begin{definition}[Naive listener] ~

We call $i \in [n]$ a naive listener if 
\[
D_{\mathrm{KL},\mathcal E}(p_j(t)\|p_i(t))\leq \eta_i \qquad j \in [n], t \geq 0.
\] 
\end{definition}

A naive listener may help to reach a global consensus, but it still depends on the conference structure. 
The following example shows the case where a three-player conference may fail despite the presence of a naive listener. 
\begin{example}[A three-player conference may fail despite a naive listener] Consider three players with binary beliefs
\[
p_1(0)=\left(\tfrac12,\tfrac12\right),\quad
p_2(0)=\left(\tfrac9{10},\tfrac1{10}\right),\quad
p_3(0)=\left(\tfrac1{10},\tfrac9{10}\right),
\]
tolerances \((\eta_1,\eta_2,\eta_3)=(3,0.1,0.1)\), and global conference \(C(t)=[3]\) at every period. Direct calculation gives \(N_1(0)=[3]\), \(N_2(0)=\{2\}\), and \(N_3(0)=\{3\}\). Players \(2\) and \(3\) therefore retain their beliefs. Player \(1\)'s belief also remains unchanged because it equals the average of all three beliefs. Thus the profile is stationary: player \(1\) is a naive listener throughout, yet global consensus is never reached.
\end{example}

Players with positive KL tolerances can reach a global consensus in the following conference structure in the presence of a naive listener.  

Fix a naive listener \(i^*\) and an order \(j_1,\ldots,j_{n-1}\) of the other players. The conference structure consists of repeated bilateral cycles followed by one grand conference:
\[
\underbrace{
\{i^*,j_1\}\;\longrightarrow\;\cdots\;\longrightarrow\;\{i^*,j_{n-1}\}
}_{\text{repeat cycles}}
\;\longrightarrow\;[n].
\]
First, they meet in pairs.
While rejected, the listener moves halfway toward the
selected player's unchanged belief.
Once accepted, both average their beliefs.
Second, they repeat the cycle. Continue until all players accept one another.
Finally, they meet all together, where everyone adopts the same average belief:
\[
p^*=\frac1n\sum_{j=1}^n p_j(T),
\]
where \(T\) is the time just before the final meeting.

\begin{proposition}\label{prop:naive-listener}
With a naive listener and positive KL tolerances,
this protocol reaches exact consensus in finite time.
\end{proposition}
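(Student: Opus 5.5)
We will track the opinion diameter on the probability simplex over $\mathcal E$ and use the bilateral cycles to drive all beliefs into a common region where everyone accepts everyone. Then we show the final grand conference produces exact consensus.

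\textbf{Step 1: the grand conference.} Suppose that at time $T$ every pair satisfies $D_{\mathrm{KL},\mathcal E}(p_j(T)\|p_i(T))\le\eta_i$. Then $N_i(T)=[n]$ for all $i$. The update rule gives each participant $p_i(T)+\frac1n\sum_j(p_j(T)-p_i(T))=p^*$, so consensus is exact. It therefore suffices to reach such a $T$ in finitely many bilateral meetings.

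\textbf{Step 2: invariants of a bilateral meeting.} In a meeting $\{i^*,j\}$ the listener always accepts $j$. If $j$ rejects, the listener moves to $\tfrac12(p_{i^*}+p_j)$. If $j$ accepts, both move to the midpoint. In either case every belief stays in the convex hull $K_0$ of the initial beliefs. Because initial beliefs have full support, $K_0$ is a compact subset of the relative interior of the simplex, so all coordinates stay bounded below by some $\delta>0$ for all $t$. On $K_0$, KL divergence is continuous and vanishes exactly on the diagonal. Hence there is $\varepsilon>0$ with $\|q-p\|\le\varepsilon \Rightarrow D_{\mathrm{KL},\mathcal E}(q\|p)\le\eta_{\min}:=\min_i\eta_i$ for all $p,q\in K_0$; one can use $D_{\mathrm{KL}}(q\|p)\le\sum_k (q_k-p_k)^2/p_k\le \|q-p\|^2/\delta$.

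\textbf{Step 3: finite rejection phase per meeting.} While $j$ rejects, $\|p_{i^*}-p_j\|$ halves each period. It therefore falls below $\varepsilon$ after at most $\lceil\log_2(\mathrm{diam}K_0/\varepsilon)\rceil$ steps, and then $j$ accepts and both adopt the midpoint. After the meeting, $p_{i^*}=p_j$.

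\textbf{Step 4: contraction over cycles (the main obstacle).} We must show that repeated cycles make all beliefs pairwise within $\varepsilon$. We will not argue via a clean diameter contraction of the whole cloud. Instead, we will use the following structure. Every operation replaces $p_{i^*}$, or the pair $(p_{i^*},p_j)$, by points on the segment between them. Players other than $i^*$ change only when they meet $i^*$, and then they land at a point where $i^*$ also sits.

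Let $D(c)=\max_j\|p_j-p_{i^*}\|$ at the start of cycle $c$. We will try to show $D(c+1)\le\theta D(c)$ for some $\theta<1$ depending only on $n$. The heuristic is that after meeting $j_m$, player $j_m$ sits at a convex combination that puts weight at least $2^{-(\cdot)}$ on the previous listener position. Meanwhile, the listener's later motion in the cycle moves it only within the hull of the current beliefs. If this uniform contraction proves delicate, the fallback is that the cycle map is a continuous averaging with all weights in a finite set bounded below. Standard products-of-stochastic-matrices (scrambling or ergodicity) arguments on the chain "$j\to i^*\to j'$" then give geometric decay of the diameter.

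Once the diameter is below $\varepsilon$, Step 2 gives mutual acceptance, and Step 1 finishes the proof. The total time is finite, being the number of cycles times at most $(n-1)\bigl(1+\lceil\log_2(\mathrm{diam}K_0/\varepsilon)\rceil\bigr)$ periods per cycle.
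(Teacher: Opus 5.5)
Your Steps 1--3 are sound, but Step 4, which you correctly flag as the main obstacle, is left unproved, and the route you propose for it fails. The listener-centred quantity $D(c)=\max_j\|p_j-p_{i^*}\|$ need not decrease over a cycle.

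Take $n=3$ with binary beliefs, written by their first coordinate. Start a cycle with the listener at $c$, $j_1$ at $c+d$ and $j_2$ at $c-d$. Suppose stage $r$ ends with both players at $\alpha_r u+(1-\alpha_r)v$, where $u$ is the listener's belief at the start of the stage. After the cycle, $j_1$ sits at $c+(1-\alpha_1)d$, while the listener and $j_2$ sit at $c-d+\alpha_2(2-\alpha_1)d$. So $D$ moves from $d$ to $(2-\alpha_1)(1-\alpha_2)d$. With exactly one rejection per stage ($\alpha_1=\alpha_2=\tfrac14$) this is $\tfrac{21}{16}d>d$. This happens, for example, with $c=\tfrac12$, $d=\tfrac15$, $\eta_{j_1}=0.05$, $\eta_{j_2}=0.1$ and a large listener tolerance.

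The same example shows that the diameter shrinks only by the factor $(1-\alpha_1/2)(1-\alpha_2)$. This tends to $1$ as the targets' tolerances shrink, so no contraction constant can depend on $n$ alone; it must depend on $\varepsilon$. Your fallback points in the right direction. But ``standard scrambling arguments'' is exactly the step that has to be verified, and verifying it is the substance of the proof.

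The missing argument uses only what you already have:
\begin{itemize}
\item By Step 3, a stage with $k$ rejections ends with weight $\alpha=2^{-(k+1)}\ge\beta:=2^{-(1+\lceil\log_2(\mathrm{diam}\,K_0/\varepsilon)\rceil)}$ on the listener's stage-start belief. The paper gets the equivalent bound $\alpha\ge\varepsilon_j/4$ from the rejection one period before acceptance.
\item Inducting along the cycle, $j_r$ ends the cycle with weight $\prod_{s\le r}\alpha_s$ on the listener's start-of-cycle belief $p_{i^*}$, and the listener ends with weight $\prod_{s\le n-1}\alpha_s$.
\item Set $\gamma:=\prod_{s=1}^{n-1}\alpha_s\ge\beta^{n-1}$. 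Every end-of-cycle belief can then be written $p_i^+=\gamma p_{i^*}+(1-\gamma)y_i$, with $y_i$ in the convex hull of the start-of-cycle beliefs.
\item Hence $\|p_i^+-p_j^+\|=(1-\gamma)\|y_i-y_j\|$, and the diameter of the whole cloud contracts by the factor $1-\beta^{n-1}$ per cycle.
\end{itemize}
This is exactly the ``clean diameter contraction'' you set aside, and it is how the paper argues: its lemma gives $\alpha\in[\beta,1/2]$, and from that the column bound $A_{iM}\ge\gamma$ follows. Once the diameter is below $\varepsilon$, your Steps 2 and 1 complete the proof.
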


\section{Information acquisition}\label{sec:information-acquisition}
Now we endogenise players' tolerance thresholds by introducing the information acquisition stage before the conferences. 
The timeline is as follows. 
\begin{enumerate}
\setlength\itemsep{3pt}
\item The planner commits to the retained share $\lambda$ and a conference protocol $\Gamma$, including a designated mediator $h$, a fixed cyclic order $(j_1,\ldots,j_{n-1})$ of the other players, the adjustment rule, and the global stopping rule.
\item All players simultaneously choose information strategies $\mu_i$. Each choice determines the distribution of player $i$'s initial posterior and the ex ante acceptance tolerance $\eta_i[\mu_i;\gamma_i]$. 
\item The state $J$ is realized, with payoff-relevant value $X=x_J$.
\item Private signals are generated, and players form their initial posteriors $\Pi_i$. \item The conference protocol $\Gamma$ is implemented.
\item At consensus, the collective action $y^\Gamma$ is taken, the
announced reimbursement is paid, and the payoffs are determined.
\end{enumerate}
\paragraph{Environment.} ~
Let \(\Theta=\{1,\ldots,\tau\}\) index the cells of \(\mathcal E\), with common prior \(p_{\mathcal E}\) satisfying \(p_k:=p_{\mathcal E}(k)>0\). The state is \(J\sim p_{\mathcal E}\), and the payoff-relevant quantity is \(X=x_J\). Fix \(0<\varepsilon<\min_k p_k\) and restrict beliefs to the compact  set 
\[
K_\varepsilon
:=\{\pi\in\Delta(\Theta):\pi(k)\geq\varepsilon
\text{ for all }k\}.
\]
Players simultaneously choose distributions of posteriors in $K_\varepsilon$, satisfying Bayes plausibility: 
\[
\mu_i\in\mathcal D_i:=\mathcal M_\varepsilon(p_{\mathcal E})
:=\left\{\mu\in\Delta(K_\varepsilon):
\int_{K_\varepsilon}\pi\,\mu(d\pi)=p_{\mathcal E}\right\}.
\]
We assume that posteriors are independent conditional on $J$ given $\mu=(\mu_1,\ldots,\mu_n)$ with 
\[
\Pr_\mu(\Pi_i\in d\pi_i\mid J=k)
=\frac{\pi_i(k)}{p_k}\,\mu_i(d\pi_i).
\]
Their realizations form the initial conference opinions. Write \(\nu_\mu\) for the induced distribution of \(\Pi=(\Pi_1,\ldots,\Pi_n)\), and \(\mathbb E_\mu\) for expectation under the joint law of \(J\) and \(\Pi\).

\paragraph{Endogenous acceptance tolerance.}~
For an experiment \(\mu_i\), define player \(i\)’s residual uncertainty by
\[
R_i(\mu_i):=\int_{K_\varepsilon}\operatorname{Var}_{\pi}(X)\,\mu_i(d\pi).
\]
Given a player-specific parameter \(\gamma_i>0\), the experiment induces acceptance tolerance
\[
\eta_i[\mu_i;\gamma_i]
:=H\bigl(R_i(\mu_i);\gamma_i\bigr),
\]
where \(H(\cdot;\gamma_i)\) is nondecreasing. 
Thus, lower residual uncertainty leads to a weakly smaller tolerance for disagreement. 
The tolerance is determined before the private signal is observed and remains fixed throughout the conferences.

We assume that all players' tolerances are positive. 
\begin{assumption}[Positive endogenous tolerances]\label{ass:positive-endogenous-tolerance}
For every $i$, the function $H(\cdot;\gamma_i)$ is continuous on
$[0,\Var_{p_{\mathcal E}}(X)]$ and strictly positive on the
attainable residual-risk set
\[
\mathcal Z_i(p_{\mathcal E})
:=\{R_i(\mu_i):\mu_i\in\mathcal M_\varepsilon(p_{\mathcal E})\}.
\]
\end{assumption}
Since $\mathcal{D}_i$ is nonempty, weak$^\ast$ compact, and convex (Lemma~\ref{lem:admissible-experiment}), and $R_i$ is weak$^\ast$ continuous and linear, 
the attainable set $\mathcal{Z}_i(p_\mathcal{E})=R_i(\mathcal{D}_i)$ is a nonempty compact interval. 

\paragraph{The mediator's adjustment.}~
Fix a mediator $h$. We assume the mediator $h$'s uniform tolerance. 

\begin{assumption}[Uniform mediator tolerance]\label{ass:mediator-tolerance}
For every admissible experiment $\mu$,
\[
\eta_h[\mu_h;\gamma_h]\geq d(\Pi)
\quad\nu_\mu\text{-almost surely}, \quad d(\pi):=\max_{a,b} D_{KL}(\pi_a \| \pi_b). 
\]
\end{assumption}

In one bilateral conference, the mediator moves toward the target only as far as needed for acceptance, and both then average.
For mediator and target beliefs \(u,v\in K_\varepsilon\), define
\[
\begin{aligned}
\theta_i(u,v;\eta_i)
&:=\max\left\{\theta\in[0,1]:
D_{\mathrm{KL}}\bigl(v+\theta(u-v)\|v\bigr)\leq\eta_i
\right\}, \\ \vspace{3pt}
F_i(u,v;\eta_i)
&:=v+\frac{\theta_i(u,v;\eta_i)}{2}(u-v).
\end{aligned}
\]
If the target already accepts \(u\), then \(\theta_i=1\)
and \(F_i=(u+v)/2\).
All other beliefs remain unchanged.

\paragraph{Global consensus.}~
Starting from \(q^0=\pi\), let \(q^r\) denote the profile after \(r\) completed bilateral stages. 
With positive tolerances, repeated cycles achieve mutual
acceptance after \(\ell=\ell^\Gamma(\pi;\eta)<\infty\)
bilateral stages. A final global conference gives the common belief and collective action 
\[
p^{*,\Gamma}(\pi;\eta)
=\frac1n\sum_{j=1}^n q_j^{\ell^\Gamma(\pi;\eta)},
\qquad
y^\Gamma(\pi;\eta)=\mathbb E_{p^{*,\Gamma}(\pi;\eta)}[X], 
\] where the stopping stage number $\ell^\Gamma(\pi;\eta)$ is formally defined as follows. 
\[
\ell^\Gamma(\pi;\eta)
:=\inf\left\{r\in\mathbb N_0:
\max_{a,b\in[n]}D_{\mathrm{KL}}(q_a^r\|q_b^r)
\leq\min_{i\in[n]}\eta_i\right\}.
\]
We obtain the following regularity. 
\begin{lemma}\label{lem:mediated-consensus-continuity}
For every positive tolerance vector, the protocol reaches
global consensus after finitely many bilateral stages,
uniformly over $\pi\in K_\varepsilon^n$. The consensus
$p^{*,\Gamma}$ and action $y^\Gamma$ are jointly continuous in
$(\pi,\eta)\in K_\varepsilon^n\times(0,\infty)^n$ for the fixed
cyclic order.
\end{lemma}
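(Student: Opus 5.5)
The plan is to prove the two claims separately: first a uniform bound on the number of bilateral stages, then continuity of the consensus map despite the integer-valued stopping index $\ell^\Gamma$. Both rest on two facts about KL on $K_\varepsilon$. First, since all coordinates are at least $\varepsilon$, there are constants $0<c\le C$ (depending only on $\varepsilon,\tau$) with
\[
c\|q-p\|^2\le D_{\mathrm{KL}}(q\|p)\le C\|q-p\|^2 \qquad (p,q\in K_\varepsilon).
\]
Second, KL is jointly convex. I read a bilateral stage with mediator $h$ and target $j$ as: both $q_h$ and $q_j$ are replaced by $F_j(q_h,q_j;\eta_j)$, and all other beliefs stay fixed.

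\textbf{Step 1 (uniform finiteness).} For $u\neq v$, the map $\theta\mapsto D_{\mathrm{KL}}(v+\theta(u-v)\|v)$ is strictly convex, vanishes at $0$, and is bounded by $C\theta^2\|u-v\|^2$. Since $\|u-v\|\le \operatorname{diam}K_\varepsilon$, this gives
\[
\theta_j\ge \underline\theta:=\min\Bigl\{1,\sqrt{\eta_{\min}/C}\,/\operatorname{diam}K_\varepsilon\Bigr\}>0,
\]
uniformly in the profile. Hence in every stage the new common point lies on the segment $[q_j,q_h]$, at a fraction at least $\underline\theta/2$ of the way from $q_j$ towards $q_h$. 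I will then run a standard convex-hull contraction argument over one full cycle $j_1,\dots,j_{n-1}$. The mediator's belief is carried from conference to conference, and each target is pulled by a uniform fraction toward a point that already carries weight from the earlier targets. I will track the width of the profile in each fixed direction $w$, that is, the spread of the values $\langle w,q_a\rangle$. The goal is a bound showing that the Euclidean diameter contracts by a factor $\kappa(\underline\theta,n)<1$ per cycle.

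I expect this contraction to be the main obstacle. It requires bookkeeping of convex weights, because after a cycle every player's belief must be a convex combination of the initial beliefs with all weights bounded below. I would first try the Markov-chain coefficient-of-ergodicity argument on the product of the $n-1$ stochastic matrices of a cycle: each has positive diagonal and an edge to $h$, so their product has a positive column (the mediator's), with entries bounded below by a power of $\underline\theta/2$. Once contraction holds, the initial diameter is bounded by $\operatorname{diam}K_\varepsilon$, so after $L(\eta)$ cycles the maximal pairwise KL is at most $C\,\mathrm{diam}^2\le \eta_{\min}$. This gives $\ell^\Gamma\le (n-1)L(\eta)$ uniformly in $\pi$, and $L$ is locally bounded in $\eta$ because $\underline\theta$ is.

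\textbf{Step 2 (stagewise continuity).} Let $A(\theta;u,v)=D_{\mathrm{KL}}(v+\theta(u-v)\|v)$. For $u\ne v$ this is continuous and strictly increasing in $\theta$, so $\theta_j$, the largest $\theta\le1$ with $A\le\eta_j$, is continuous in $(u,v,\eta_j)$. At $u=v$ the choice of $\theta$ is irrelevant, since $F=v$ and $\|F-v\|\le\|u-v\|$. Hence $F_j$ is continuous, and by composition each fixed-stage profile $q^r(\pi,\eta)$ is continuous. The same then holds for its average $G_r:=\frac1n\sum_a q_a^r$.

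\textbf{Step 3 (absorbing the jumps of $\ell^\Gamma$).} First, the acceptance set $S_\eta=\{q:\max_{a,b}D_{\mathrm{KL}}(q_a\|q_b)\le\eta_{\min}\}$ is invariant under a stage with $\theta=1$. In that case the pair is replaced by its midpoint, and joint convexity bounds every new pairwise divergence by existing ones. Moreover, when $q\in S_\eta$ every target accepts, so $\theta=1$. Since full averaging of a pair preserves the sum of beliefs, $G_r=G_{\ell^\Gamma}$ for all $r\ge\ell^\Gamma$.

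Now fix $(\pi^0,\eta^0)$ with $\ell_0=\ell^\Gamma(\pi^0,\eta^0)$. Take a sequence $(\pi^k,\eta^k)\to(\pi^0,\eta^0)$ and suppose $\ell^\Gamma(\pi^k,\eta^k)=m<\ell_0$ infinitely often. The condition $q^m\in S_\eta$ is closed in $(\pi,\eta)$, so it would hold at the limit, contradicting the definition of $\ell_0$ as an infimum. So eventually $\ell^\Gamma(\pi^k,\eta^k)\in\{\ell_0,\dots,\bar L\}$, where $\bar L$ is the uniform bound from Step 1. Passing to a subsequence on which $\ell^\Gamma(\pi^k,\eta^k)\equiv r$ is constant, we get
\[
p^{*,\Gamma}(\pi^k,\eta^k)=G_r(\pi^k,\eta^k)\to G_r(\pi^0,\eta^0)=G_{\ell_0}(\pi^0,\eta^0)=p^{*,\Gamma}(\pi^0,\eta^0).
\]
The convergence uses Step 2, and the equality $G_r=G_{\ell_0}$ at the limit uses the invariance just shown. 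Every subsequence therefore has a further subsequence converging to $p^{*,\Gamma}(\pi^0,\eta^0)$, which proves continuity of $p^{*,\Gamma}$. Continuity of $y^\Gamma=\sum_k p^{*,\Gamma}(k)x_k$ follows because it is linear in $p^{*,\Gamma}$.
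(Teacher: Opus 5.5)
Your proposal is correct and follows essentially the same route as the paper: a uniform positive lower bound on $\theta$, the cycle contraction through the mediator's column from Proposition~\ref{prop:naive-listener}, the quadratic KL bound on $K_\varepsilon$, and invariance of the stopping condition (with preserved average) under the later midpoint updates by joint convexity. The one difference is in Step~3, where your subsequence argument using lower semicontinuity of $\ell^\Gamma$ is valid but unnecessary. Since $G_r=G_{\ell^\Gamma}$ for all $r\ge\ell^\Gamma$ and $\ell^\Gamma\le\bar L$ on a neighborhood of $\eta^0$, you can write $p^{*,\Gamma}=G_{\bar L}$ there, a fixed finite composition of continuous maps, which is exactly how the paper concludes.
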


\paragraph{Cumulative conference cost.} ~
The planner commits to the mediator's retained share
\(\lambda\in[0,1]\) before experiments are chosen.

For bilateral stage \(r\), with target \(j_r\), write
\[
\theta_r^\Gamma
:=\theta_{j_r}(q_h^{r-1},q_{j_r}^{r-1};\eta_{j_r}).
\]
Cumulative conference cost is the final-conference charge
plus the bilateral stage durations:
\[
B_h^\Gamma(\pi;\eta)
:=1+\sum_{r=1}^{\ell^\Gamma(\pi;\eta)}
(2-\theta_r^\Gamma).
\]
The leading \(1\) is a fixed charge for the final global conference. 
We assign each bilateral stage the continuous duration index \(2-\theta_r^\Gamma\): one unit for agreement and \(1-\theta_r^\Gamma\) units for the mediator’s adjustment. 
The sum includes all completed bilateral stages before stopping, across all cycles. The mediator bears a share \(\lambda\in[0,1]\) of this cost, while the planner reimburses the remaining share \(1-\lambda\); nonmediators incur no conference cost.

Importantly, $B_h^\Gamma$ is jointly lower semicontinuous and has a finite common bound. 
\begin{lemma}\label{lem:stopped-cost-regularity}
For the fixed cyclic order, $B_h^\Gamma$ is jointly lower
semicontinuous on $K_\varepsilon^n\times(0,\infty)^n$ and
locally bounded in the tolerance vector, uniformly over
initial profiles. It satisfies
\[
1+\ell^\Gamma(\pi;\eta)
\leq B_h^\Gamma(\pi;\eta)
\leq1+2\ell^\Gamma(\pi;\eta).
\]
Under Assumption~\ref{ass:positive-endogenous-tolerance},
there is a finite common bound $M$ on the number of bilateral
stages, across all admissible experiment profiles and initial
posteriors. Consequently,
\[
1\leq B_h^\Gamma(\pi;\eta[\mu;\gamma])\leq1+2M.
\]
\end{lemma}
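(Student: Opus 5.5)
The proof splits into four parts: the sandwich bounds, the uniform bound $M$, local boundedness, and lower semicontinuity, which I expect to be the main obstacle.

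\textbf{Sandwich bounds.} These are immediate from the definition. Each duration index satisfies $\theta_r^\Gamma\in[0,1]$, so $1\le 2-\theta_r^\Gamma\le 2$. Summing over the $\ell^\Gamma(\pi;\eta)$ completed bilateral stages and adding the fixed charge $1$ gives
\[
1+\ell^\Gamma(\pi;\eta)\le B_h^\Gamma(\pi;\eta)\le 1+2\ell^\Gamma(\pi;\eta).
\]
The lower bound $1\le B_h^\Gamma$ follows from $\ell^\Gamma\ge 0$. Every remaining claim therefore reduces to properties of the stopping index $\ell^\Gamma$.

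\textbf{Uniform bound $M$.} Under Assumption~\ref{ass:positive-endogenous-tolerance}, each $H(\cdot;\gamma_i)$ is continuous and strictly positive on the compact interval $\mathcal Z_i(p_{\mathcal E})$. It therefore attains a minimum $\underline\eta_i>0$ there, so every admissible experiment profile yields $\eta[\mu;\gamma]\ge\underline\eta:=(\underline\eta_1,\dots,\underline\eta_n)$ componentwise.

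Next I use monotonicity of the stopping time in the tolerances, or at least a bound depending only on a lower bound for $\min_i\eta_i$. Lemma~\ref{lem:mediated-consensus-continuity} gives finiteness of $\ell^\Gamma$ uniformly over $\pi\in K_\varepsilon^n$. Its proof (the contraction of the opinion diameter, as in Proposition~\ref{prop:naive-listener}) shows more: the per-cycle contraction factor depends only on a lower bound for the tolerances and on $\varepsilon$.

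The reason is as follows. On $K_\varepsilon$ the KL divergence is comparable to the squared Euclidean distance, with constants depending only on $\varepsilon$. Each target $j$ therefore accepts any $u$ within Euclidean distance $c_\varepsilon\sqrt{\eta_j}$ of its belief. The threshold $\theta_j$ is then at least $\min\{1,c_\varepsilon\sqrt{\eta_j}/|u-v|\}$, and each full cycle shrinks the diameter by a fixed fraction, or else brings it below $c\sqrt{\min_i\eta_i}$ in a bounded number of cycles.

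Taking $\eta\ge\underline\eta$, this gives a bound $\ell^\Gamma(\pi;\eta)\le M(\underline\eta,\varepsilon)$ for all $\pi$, and hence $B_h^\Gamma\le 1+2M$. The same argument applied to any compact subset of $(0,\infty)^n$, which has a positive lower bound, yields local boundedness in $\eta$ uniformly over initial profiles.

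\textbf{Lower semicontinuity.} Fix $(\pi,\eta)$ and a sequence $(\pi^m,\eta^m)\to(\pi,\eta)$. By local boundedness, $\ell^\Gamma(\pi^m;\eta^m)$ takes finitely many values, so it suffices to treat subsequences on which $\ell^m\equiv L$ is constant. The plan has three steps.

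\emph{Step 1: continuity of the stage map.} For fixed $r$, the profile $q^r$ and the indices $\theta_1,\dots,\theta_r$ are continuous in $(\pi,\eta)$. The key point is that $\theta_i(u,v;\eta_i)$ is continuous. This holds because $\theta\mapsto D_{\mathrm{KL}}(v+\theta(u-v)\|v)$ is continuous, strictly increasing for $u\neq v$ (by convexity, being zero at $\theta=0$), and jointly continuous in $(u,v)$. Its upper level-set threshold is then continuous in $(u,v,\eta_i)$ for $\eta_i>0$; the case $u=v$ is harmless since the threshold equals $1$ nearby. Consequently $F_i$ and all $q^r$ are continuous.

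\emph{Step 2: the stopping index.} The stopping set $\{\max_{a,b}D_{\mathrm{KL}}(q_a^r\|q_b^r)\le\min_i\eta_i\}$ is closed. Passing to the limit in $\max D_{\mathrm{KL}}(q^L)\le\min\eta$ gives $\ell^\Gamma(\pi;\eta)\le L$. The stopping index is therefore lower semicontinuous only in the sense that the limit can stop no later than $L$. Cost lower semicontinuity needs $\liminf B^m\ge B$, and since $B$ is increasing in the number of stages, $\ell\le L$ is exactly the right direction.

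\emph{Step 3: the cost.} Write
\[
B^m=1+\sum_{r\le \ell(\pi;\eta)}(2-\theta^m_r)+\sum_{\ell<r\le L}(2-\theta^m_r).
\]
The first sum converges to $B_h^\Gamma(\pi;\eta)-1$ by Step 1. The second sum is nonnegative. Hence $\liminf B^m\ge B_h^\Gamma(\pi;\eta)$.

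The main obstacle is the quantitative uniform bound, since it requires extracting an explicit contraction rate from the Proposition~\ref{prop:naive-listener} argument under the partial-move rule $F_i$. The continuity of the threshold $\theta_i$ at boundary cases also needs care.
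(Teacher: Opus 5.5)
Your proposal is correct and follows essentially the paper's route: the sandwich bound comes from $1\le 2-\theta_r^\Gamma\le 2$, lower semicontinuity from continuity of the finitely many stage maps plus closedness of the stopping condition, and the common bound $M$ from the cycle-contraction argument of Lemma~\ref{lem:mediated-consensus-continuity}, applied to tolerances bounded below by $\min_i\min_{z\in\mathcal Z_i(p_{\mathcal E})}H(z;\gamma_i)>0$. The differences are minor. For lower semicontinuity, the paper writes $B_h^\Gamma=1+\sum_r\mathbf 1_{\{\ell^\Gamma\ge r\}}(2-\theta_r^\Gamma)$ as an increasing sum of lower semicontinuous terms, using that each set $\{\ell^\Gamma\ge r\}$ is open, rather than your sequential reduction to constant $\ell^m$. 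For the bound $M$, the paper covers the compact box $[\underline\eta,\overline\eta]^n$ by finitely many neighborhoods, whereas your explicit lower bound on $\theta_j$ in terms of $\sqrt{\eta_j}$ shows that only the lower tolerance bound matters, so the cover is unnecessary.
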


\paragraph{Ex ante and interim utilities} ~
Fix \(\Gamma\), \(\gamma\), and \(\lambda\), and write \(\eta=\eta[\mu;\gamma]\). After observing posterior \(\pi_i\), player \(i\)’s interim decision payoff and conditional conference cost are given by 
\[
\begin{aligned}
V_i^\Gamma(\pi_i;\mu_{-i},\eta)
&:=-r_i\,\mathbb E_\mu\!\left[
\bigl(y^\Gamma(\Pi;\eta)-X\bigr)^2
\mid\Pi_i=\pi_i
\right],\\
C_i(\pi_i;\mu_{-i},\eta)
&:=\mathbb E_\mu\!\left[
B_h^\Gamma(\Pi;\eta)\mid\Pi_i=\pi_i
\right],
\end{aligned}
\]
where \(r_i>0\) weights decision loss. For every \(\rho\in K_\varepsilon\),
including posteriors outside \(\operatorname{supp}(\mu_i)\), conditional
expectations use the version
\begin{equation}\label{eq:posterior-conditional-law}
\Pr_\mu(J=k,\Pi_{-i}\in d\pi_{-i}\mid\Pi_i=\rho)
:=\rho(k)\bigotimes_{j\ne i}
\left[\frac{\pi_j(k)}{p_k}\,\mu_j(d\pi_j)\right].
\end{equation}
This law depends only on \(\rho\) and \(\mu_{-i}\).
Information acquisition costs are
\[
c_i(\mu_i)
:=\kappa_i\int_{K_\varepsilon}
D_{\mathrm{KL}}(\pi_i\|p_{\mathcal E})\,\mu_i(d\pi_i),
\qquad \kappa_i>0.
\]
Define the net payoff integrand
\[
\begin{aligned}
f_i^\lambda(\pi_i;\mu_{-i},\eta)
:={}&V_i^\Gamma(\pi_i;\mu_{-i},\eta)
-\kappa_iD_{\mathrm{KL}}(\pi_i\|p_{\mathcal E})-\lambda\mathbf1_{\{i=h\}}
C_i(\pi_i;\mu_{-i},\eta).
\end{aligned}
\]
By the law of iterated expectations, ex ante utility is given by 
\[
U_i^\lambda(\mu;\Gamma)
=\int_{K_\varepsilon}
f_i^\lambda(\pi_i;\mu_{-i},\eta[\mu;\gamma])\,\mu_i(d\pi_i).
\]
We define the information acquisition game as follows. 
\begin{definition} 
For fixed primitives and conference protocol \(\Gamma\) with mediator \(h\) with retained share $\lambda$, we write the information-acquisition game 
\[
\mathcal G^\lambda(\Gamma,\gamma)
=\left([n],(\mathcal D_i)_{i\in[n]},(U_i^\lambda)_{i\in[n]}\right), 
\] where 
players simultaneously choose experiments \(\mu_i\) before signals are realised. 
\end{definition}

\section{Existence Results}\label{sec:existence}

\paragraph{Regularity.} ~ 
In the following lemma, we establish continuity of \(U_i^\lambda\) when \(i\neq h\) or \(\lambda=0\), and upper semicontinuity for the mediator when \(\lambda>0\). This ensures the weighted potential function $\Phi$ defined after the lemma is continuous at full reimbursement. 

\begin{lemma}\label{lem:acquisition-payoff-regularity}
Let $E=C(K_\varepsilon)$, $E^*=\mathfrak B(K_\varepsilon)$, and $\mathcal D=\prod_{j=1}^n\mathcal D_j$,
where 
\begin{equation}
\mathcal D_i
=
\mathcal M_\varepsilon(p_{\mathcal E})
=
\left\{\mu_i\in\Delta(K_\varepsilon):
\int_{K_\varepsilon}\pi\,\mu_i(d\pi)=p_{\mathcal E}
\right\}, \nonumber
\end{equation}
$E$ has the supremum norm and $\mathfrak B(K_\varepsilon)$
is the space of finite signed Borel measures, paired by
\[
\langle g,\zeta\rangle
:=\int_{K_\varepsilon}g\,d\zeta.
\]
Under Assumption~\ref{ass:positive-endogenous-tolerance},
$U_i^\lambda:\mathcal D\to\mathbb R$ is continuous when
$i\ne h$ or $\lambda=0$, and upper semicontinuous for the
mediator when $\lambda>0$, in the product of the weak$^*$
topologies $\sigma(E^*,E)$. In particular, for every fixed share,
\[
\mu_j^m\overset{*}{\rightharpoonup}\mu_j
\quad\text{for every }j\in[n]
\quad\Longrightarrow\quad
\limsup_{m\to\infty}U_i^\lambda(\mu^m;\Gamma)
\leq U_i^\lambda(\mu;\Gamma).
\]
At fixed positive tolerances, the same continuity and upper
semicontinuity statements hold for $f_i^\lambda$ as a
function of the player's posterior.
\end{lemma}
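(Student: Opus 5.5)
Write $\eta^m=\eta[\mu^m;\gamma]$ and $\eta=\eta[\mu;\gamma]$ along a sequence $\mu^m\overset{*}{\rightharpoonup}\mu$ in $\mathcal D$. The plan is to split $U_i^\lambda$ into three pieces: the decision term $\int V_i^\Gamma\,d\mu_i$, the information cost $-c_i(\mu_i)$, and, for the mediator only, the conference term $-\lambda\int C_h\,d\mu_h$. Each piece is handled separately.

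\textbf{Tolerances and information cost.} First, the tolerances converge. The map $\pi\mapsto\Var_\pi(X)$ is continuous on $K_\varepsilon$, so $R_j(\mu_j^m)\to R_j(\mu_j)$. Continuity of $H(\cdot;\gamma_j)$ then gives $\eta^m\to\eta$. By Assumption~\ref{ass:positive-endogenous-tolerance} and compactness of $\mathcal Z_j$, all these tolerances lie in a fixed compact box $[\underline\eta,\overline\eta]^n\subset(0,\infty)^n$. Second, $\pi\mapsto D_{\mathrm{KL}}(\pi\|p_{\mathcal E})$ is continuous on $K_\varepsilon$, because $p_k>0$. Hence $c_i$ is weak$^*$ continuous.

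\textbf{Decision term.} The joint law (for $\rho$ arbitrary, by \eqref{eq:posterior-conditional-law}) can be written as
\[
\int V_i^\Gamma\,d\mu_i=-r_i\sum_k p_k\int\prod_{j}\frac{\pi_j(k)}{p_k}\,\bigl(y^\Gamma(\pi;\eta)-x_k\bigr)^2\,\textstyle\bigotimes_j\mu_j(d\pi_j).
\]
The integrand $g(\pi,\eta)$ is jointly continuous by Lemma~\ref{lem:mediated-consensus-continuity}. It is therefore uniformly continuous on the compact set $K_\varepsilon^n\times[\underline\eta,\overline\eta]^n$. This gives $\sup_\pi|g(\pi,\eta^m)-g(\pi,\eta)|\to0$. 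The product measures $\bigotimes_j\mu_j^m$ converge weak$^*$ to $\bigotimes_j\mu_j$ on the compact metric space $K_\varepsilon^n$, by the standard Stone--Weierstrass argument on product functions. A triangle inequality then yields convergence of the decision term. The same uniform continuity argument shows that $\rho\mapsto V_i^\Gamma(\rho;\mu_{-i},\eta)$ is continuous in $\rho$ at fixed $\eta$. The density $\rho(k)$ enters linearly, and the remaining integrand is uniformly continuous. This settles the final statement for $f_i^\lambda$ when $i\neq h$ or $\lambda=0$.

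\textbf{Mediator's conference term (main obstacle).} By Lemma~\ref{lem:stopped-cost-regularity}, $B_h^\Gamma$ is jointly lower semicontinuous and bounded by $1+2M$ uniformly over admissible profiles. Hence the integrand $b(\pi,\eta)=\sum_k p_k\prod_j\frac{\pi_j(k)}{p_k}B_h^\Gamma(\pi;\eta)$ is a bounded, nonnegative, lower semicontinuous function on the compact set. The difficulty is that $\eta^m$ varies together with the measures, so the Portmanteau theorem does not apply directly. I would lift to the compact space $K_\varepsilon^n\times[\underline\eta,\overline\eta]^n$ and use the measures $\bigotimes_j\mu_j^m\otimes\delta_{\eta^m}$, which converge weak$^*$ to $\bigotimes_j\mu_j\otimes\delta_{\eta}$. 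The Portmanteau inequality for bounded-below lower semicontinuous functions then gives
\[
\liminf_m\int b(\cdot,\eta^m)\,d\textstyle\bigotimes_j\mu_j^m\ \ge\ \int b(\cdot,\eta)\,d\textstyle\bigotimes_j\mu_j.
\]
To justify this, I would approximate $b$ from below by an increasing sequence of continuous functions (Baire) and pass to the limit. Multiplying by $-\lambda<0$ turns this inequality into upper semicontinuity of $-\lambda\int C_h\,d\mu_h$. Adding the continuous pieces gives upper semicontinuity of $U_h^\lambda$ and the displayed $\limsup$ inequality; for $i\neq h$ or $\lambda=0$ the inequality holds with equality. At fixed $\eta$, the same Baire and monotone-convergence argument applied to $\rho\mapsto\int b\,d(\cdot)$ shows that $\rho\mapsto C_h(\rho;\mu_{-h},\eta)$ is lower semicontinuous. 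Consequently $f_h^\lambda$ is upper semicontinuous in the posterior.
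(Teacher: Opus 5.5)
Your proof is correct and follows essentially the paper's route: convergence of the endogenous tolerances, weak$^*$ convergence of the product posterior laws via Stone--Weierstrass, lifting to $K_\varepsilon^n$ times a compact tolerance box with $\delta_{\eta^m}$, uniform continuity for the decision loss, and the Portmanteau inequality for the lower semicontinuous, bounded conference cost. The differences are cosmetic. You put the likelihood factors $\prod_j\pi_j(k)/p_k$ into the integrand instead of into the measures $\rho^k_{j,\mu}$, you use the global box $[\underline\eta,\overline\eta]^n$ rather than a local one, and you prove the lower semicontinuous Portmanteau step by monotone continuous approximation instead of citing it; you should also state, as the paper does, that weak$^*$ metrizability on $\Delta(K_\varepsilon)$ justifies working with sequences.
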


For opponents' experiments $\mu_{-i}$, define each player $i$'s best-response correspondence 
\begin{equation}
\operatorname{BR}_i^\Gamma(\mu_{-i};\gamma,\lambda)
:=
\operatorname*{arg\,max}_{\widehat\mu_i\in\mathcal D_i}
U_i^\lambda(\widehat\mu_i,\mu_{-i};\Gamma). \nonumber 
\end{equation}
Since $\mathcal{D}$ is compact and nonempty and each $U_i$ is upper semicontinuous, we see that 
$\operatorname{BR}_i^\Gamma(\mu_{-i};\gamma,\lambda)$ is nonempty and compact. 

\paragraph{Weighted-potential property.} ~
Under full reimbursement, $\lambda=0$, and fixed $\Gamma$ and
$\gamma$, the information-acquisition game $\mathcal G^0(\Gamma,\gamma)$ is a
weighted potential game \citep{monderer1996potential} with
\[
\Phi(\mu)
:=-L(\mu)-\sum_{j=1}^n\frac{c_j(\mu_j)}{r_j},
\qquad
\mu\in\mathcal D:=\prod_{j=1}^n\mathcal D_j,
\]
where $L(\mu)$ is the expected squared decision error:
\[
\begin{aligned}
L(\mu)
&:=\mathbb E_\mu\!\left[
\bigl(y^\Gamma(\Pi;\eta[\mu;\gamma])-X\bigr)^2
\right].
\end{aligned}
\]
Since \(U_i^0(\mu;\Gamma)=-r_iL(\mu)-c_i(\mu_i)\), and opponents’ acquisition costs remain unchanged under unilateral deviations, \(\Phi\) is a weighted potential.
Since \(\Phi\) is continuous on the nonempty weak\(^*\) compact set \(\mathcal D\), it attains a maximum at some \(\mu^*\). 
For every player \(i\) and for any unilateral deviation \(\widehat\mu_i\in\mathcal D_i\),
\[
\begin{aligned}
&U_i^0(\widehat\mu_i,\mu_{-i}^*;\Gamma)-U_i^0(\mu^*;\Gamma) =r_i\left[
\Phi(\widehat\mu_i,\mu_{-i}^*)-\Phi(\mu^*)
\right]\leq0.
\end{aligned}
\]
Thus every global maximizer of \(\Phi\) is a pure-strategy Nash equilibrium, $\mu_i^\ast \in \operatorname{BR}_i^\Gamma(\mu_{-i}^\ast;\gamma,0)$ for all $i$, establishing existence.
\begin{proposition}\label{prop:acquisition-equilibrium-existence}
Under Assumption~\ref{ass:positive-endogenous-tolerance},
the simultaneous information-acquisition game with full
reimbursement, $\mathcal G^0(\Gamma,\gamma)
$, has a pure-strategy Nash equilibrium.
Every global maximizer of $\Phi$ on $\mathcal D$ is such an
equilibrium. Heterogeneous positive weights $r_i$ are allowed.
\end{proposition}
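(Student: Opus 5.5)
The plan is the standard weighted-potential argument; most of the work goes into verifying its ingredients.

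First, $\mathcal D$ is nonempty, convex and weak$^*$ compact. Each factor $\mathcal D_i=\mathcal M_\varepsilon(p_{\mathcal E})$ has these properties by Lemma~\ref{lem:admissible-experiment}: $\Delta(K_\varepsilon)$ is weak$^*$ compact by Banach--Alaoglu/Prokhorov on the compact metric space $K_\varepsilon$. The Bayes-plausibility constraint is closed because $\pi\mapsto\pi(k)$ is continuous on $K_\varepsilon$. The set is nonempty because the point mass at $p_{\mathcal E}$ is admissible, using $p_k>\varepsilon$. Tychonoff then gives compactness of the product.

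Second, I would verify the potential identity. With $\lambda=0$ the mediator's conference cost term drops out, so
\[
U_i^0(\mu;\Gamma)=\int f_i^0\,d\mu_i=-r_i\,\mathbb E_\mu\bigl[(y^\Gamma(\Pi;\eta[\mu;\gamma])-X)^2\bigr]-c_i(\mu_i)=-r_iL(\mu)-c_i(\mu_i).
\]
This uses the law of iterated expectations with the conditional version \eqref{eq:posterior-conditional-law}. One checks that integrating $\rho(k)\prod_{j\ne i}\frac{\pi_j(k)}{p_k}\mu_j$ against $\mu_i(d\rho)$ reproduces the joint law of $(J,\Pi)$, since $\int\rho(k)\mu_i(d\rho)=p_k$. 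Hence
\[
U_i^0(\mu)=r_i\Phi(\mu)+r_i\sum_{j\ne i}\frac{c_j(\mu_j)}{r_j},
\]
and the last term does not depend on $\mu_i$. So for any unilateral deviation,
\[
U_i^0(\widehat\mu_i,\mu_{-i})-U_i^0(\mu)=r_i\bigl[\Phi(\widehat\mu_i,\mu_{-i})-\Phi(\mu)\bigr],
\]
with $r_i>0$. This is exactly a weighted potential in the sense of \citet{monderer1996potential}, and heterogeneous weights cause no difficulty.

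Third, and this is the main point, I would show that $\Phi$ is weak$^*$ upper semicontinuous on $\mathcal D$; continuity is not needed. The loss $L$ is continuous: Lemma~\ref{lem:acquisition-payoff-regularity} with $\lambda=0$ gives continuity of $U_1^0=-r_1L-c_1$. Alternatively, $L$ is the integral of a bounded function that is jointly continuous in $(\pi,\eta)$ by Lemma~\ref{lem:mediated-consensus-continuity}. Here $\eta=\eta[\mu;\gamma]$ depends continuously on $\mu$ through the linear continuous $R_i$ and the continuous $H$, and Assumption~\ref{ass:positive-endogenous-tolerance} keeps $\eta$ in a compact subset of $(0,\infty)^n$. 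The cost $c_j(\mu_j)$ is the integral of the continuous bounded function $\pi\mapsto D_{\mathrm{KL}}(\pi\|p_{\mathcal E})$ on $K_\varepsilon$, so it is weak$^*$ continuous and linear. Hence $\Phi$ is continuous.

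The delicate part is that the integrand of $L$ depends on $\mu$ both through the product measure $\nu_\mu$ and through $\eta[\mu]$. I would handle it by combining two facts: uniform continuity of the integrand on the compact set $K_\varepsilon^n\times\eta(\mathcal D)$, and weak$^*$ convergence of product measures. This is precisely what Lemma~\ref{lem:acquisition-payoff-regularity} already packages, so I would simply cite it.

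Finally, by Weierstrass, $\Phi$ attains its maximum at some $\mu^*\in\mathcal D$. By the identity in the second step, any deviation gives $U_i^0(\widehat\mu_i,\mu^*_{-i})-U_i^0(\mu^*)\le0$. So every maximizer satisfies $\mu_i^*\in\operatorname{BR}_i^\Gamma(\mu^*_{-i};\gamma,0)$ for all $i$ and is a pure-strategy Nash equilibrium.
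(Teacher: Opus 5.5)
Your proposal is correct and follows the paper's proof essentially step for step: compactness of $\mathcal D$ from Lemma~\ref{lem:admissible-experiment}, continuity of $L$ and each $c_j$ (hence of $\Phi$) taken from Lemma~\ref{lem:acquisition-payoff-regularity}, attainment of the maximum, and the identity $U_i^0(\widehat\mu_i,\mu^*_{-i})-U_i^0(\mu^*)=r_i[\Phi(\widehat\mu_i,\mu^*_{-i})-\Phi(\mu^*)]\le 0$. You also check that the kernel \eqref{eq:posterior-conditional-law} integrates back to the joint law, so that $U_i^0=-r_iL-c_i$, which the paper only asserts in the main text; your remark that upper semicontinuity of $\Phi$ would suffice is true, but harmless, since you establish full continuity anyway.
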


In the benchmark where every player bears the same retained
conference cost and $r_i=r>0$ for all $i$,
the function
\[
\Phi^\lambda(\mu)
:=-rL(\mu)-\lambda \mathbb{E}_\mu[B_h^\Gamma(\Pi; \eta[\mu; \gamma])]-\sum_{j=1}^n c_j(\mu_j)
\]
is an exact potential. Its upper semicontinuity and the
compactness of the strategy space imply the existence of a
pure-strategy Nash equilibrium.

When only the mediator bears the retained conference cost,
this potential argument does not generally apply to every
player for $\lambda>0$. Nevertheless, sufficiently high
information-acquisition costs guarantee an uninformative
equilibrium, as the following proposition shows.

\begin{proposition}[Existence with sufficiently costly information]\label{prop:existence-costly-information-case} 
Define
\[
\Delta_x:=\max_{k\in\Theta}x_k-\min_{k\in\Theta}x_k.
\]
Suppose that
\[
\kappa_i\geq \frac{r_i\Delta_x^2}{2}
\qquad\text{for every }i\in[n].
\]
Then the uninformative profile
\[
\mu^0=(\delta_{p_{\mathcal E}},\ldots,\delta_{p_{\mathcal E}})
\]
is a pure-strategy Nash equilibrium for every retained share \(\lambda\in[0,1]\) in 
$\mathcal G^\lambda(\Gamma,\gamma)$. In particular, an equilibrium exists with zero reimbursement, \(\lambda=1\).
\end{proposition}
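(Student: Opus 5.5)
The plan is a direct unilateral-deviation argument at $\mu^0$. I will use only that the consensus action is a function of the posterior profile, together with a Pinsker-type bound. Fix $i$ and $\lambda\in[0,1]$, and let every $j\neq i$ play $\delta_{p_{\mathcal E}}$.

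\emph{Baseline.} Under $\mu^0$ every realized posterior equals $p_{\mathcal E}$. Hence $\max_{a,b}D_{\mathrm{KL}}(q_a^0\|q_b^0)=0$, so $\ell^\Gamma=0$. The consensus is then $p^{*,\Gamma}=p_{\mathcal E}$, the action is $y^\Gamma=\mathbb E_{p_{\mathcal E}}[X]$, and $B_h^\Gamma=1$. This holds whatever the tolerance vector, which is positive by Assumption~\ref{ass:positive-endogenous-tolerance}. Consequently
\[
U_i^\lambda(\mu^0;\Gamma)=-r_i\Var_{p_{\mathcal E}}(X)-\lambda\mathbf 1_{\{i=h\}}.
\]

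\emph{Deviation.} Let player $i$ deviate to $\widehat\mu_i\in\mathcal D_i$.

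Consider first the conference cost. Since opponents' posteriors are degenerate at $p_{\mathcal E}$, the profile $\Pi$ is determined by $\Pi_i$. By Lemma~\ref{lem:stopped-cost-regularity}, $B_h^\Gamma\ge 1$, so the mediator's retained cost term can only weakly increase under the deviation.

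Next consider the decision loss. Because $y^\Gamma(\Pi;\eta)$ is a measurable function of $\Pi_i=\pi$ alone, the conditional law \eqref{eq:posterior-conditional-law} gives $J\sim\pi$ given $\Pi_i=\pi$. Therefore
\[
\mathbb E\bigl[(y^\Gamma-X)^2\mid\Pi_i=\pi\bigr]\ge\Var_\pi(X).
\]
This bound is unaffected by the change in $\eta_i$ induced by the deviation. Integrating, and using Bayes plausibility $\int\pi\,\widehat\mu_i(d\pi)=p_{\mathcal E}$ (the variance decomposition), I get
\[
U_i^\lambda(\widehat\mu_i,\mu^0_{-i})-U_i^\lambda(\mu^0)\le\int_{K_\varepsilon}\Bigl[r_i\bigl(\mathbb E_\pi X-\mathbb E_{p_{\mathcal E}}X\bigr)^2-\kappa_iD_{\mathrm{KL}}(\pi\|p_{\mathcal E})\Bigr]\widehat\mu_i(d\pi).
\]

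\emph{Key inequality.} It remains to show the integrand is nonpositive pointwise. Shifting $X$ by a constant gives
\[
|\mathbb E_\pi X-\mathbb E_{p_{\mathcal E}}X|\le\Delta_x\,\mathrm{TV}(\pi,p_{\mathcal E}).
\]
Pinsker's inequality gives $\mathrm{TV}^2\le D_{\mathrm{KL}}/2$. Combining these, $r_i(\mathbb E_\pi X-\mathbb E_{p_{\mathcal E}}X)^2\le \tfrac{r_i\Delta_x^2}{2}D_{\mathrm{KL}}(\pi\|p_{\mathcal E})\le\kappa_iD_{\mathrm{KL}}(\pi\|p_{\mathcal E})$ under the hypothesis on $\kappa_i$. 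So no deviation is profitable for any $i$ and any $\lambda$, and $\mu^0$ is a pure-strategy Nash equilibrium of $\mathcal G^\lambda(\Gamma,\gamma)$. Taking $\lambda=1$ gives the final claim.

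The main subtlety is the deviation step. The deviator's choice changes its own tolerance and hence the whole protocol path. The argument avoids tracking this because it uses only the measurability of $y^\Gamma$ in $\Pi_i$ and $B_h^\Gamma\ge1$. I will also check that the constant works out: with the sup-minus-inf oscillation $\Delta_x$, the bound is $\Delta_x\cdot\mathrm{TV}$ rather than $(\Delta_x/2)\|\cdot\|_1$ counted twice.
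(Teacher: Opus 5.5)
Your proposal is correct and follows essentially the same route as the paper's proof. The shared steps are the conditional bias--variance bound giving $L(\widehat\mu_i,\mu^0_{-i})\ge R_i(\widehat\mu_i)$ regardless of the induced tolerances, the law of total variance, the estimate $|\mathbb E_\pi X-\mathbb E_{p_{\mathcal E}}X|\le\Delta_x\|\pi-p_{\mathcal E}\|_{\mathrm{TV}}$ combined with Pinsker, and $B_h^\Gamma\ge1$ to handle the mediator's retained cost; your explicit baseline computation ($\ell^\Gamma=0$, $B_h^\Gamma=1$, $y^\Gamma=\mathbb E_{p_{\mathcal E}}X$ under $\mu^0$) simply spells out what the paper leaves implicit.
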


\paragraph{Partial Reimbursement.} ~
Let $\mu^*$ be a Nash equilibrium under full reimbursement.
For each mediator deviation $b_h\in\mathcal D_h$, define
\[
\begin{aligned}
A(b_h)
&:=U_h^0(\mu^*;\Gamma)
-U_h^0(b_h,\mu_{-h}^*;\Gamma)\geq0,\\
S(b_h)
&:=\mathbb E_{\mu^*}[B_h^\Gamma]
-\mathbb E_{(b_h,\mu_{-h}^*)}[B_h^\Gamma].
\end{aligned}
\]
Here $A$ is the baseline payoff sacrifice and $S$ is the expected conference-cost saving. 

At retained share \(\lambda\), the deviation gain is
\(-A(b_h)+\lambda S(b_h)\). 
Since nonmediators’ incentives are unchanged, the same profile \(\mu^*\) remains a Nash equilibrium if and only if
\[
0\leq\lambda\leq
\lambda_{\mathrm{NE}}^*(\mu^*)
:=\min\left\{1,
\inf_{\substack{b_h\in\mathcal D_h\\S(b_h)>0}}
\frac{A(b_h)}{S(b_h)}
\right\},
\]
with \(\inf\varnothing=+\infty\). 

\section{Computational Aspects}\label{sec:computation}
We formulate conditional best-response problems as conic linear programs over posterior distributions. 
We first establish strong duality. 
We then show how to compute approximate best responses and certify \(\xi\)-Nash equilibria under full reimbursement. 
Finally, we characterise and certify retained conference cost shares under which the same profile remains a \(\xi\)-Nash equilibrium. 
In the following, fix opponents’ experiments \(\mu_{-i}\) and an attainable residual-risk level \(z_i\). Set \(\bar\eta_i=H(z_i;\gamma_i)\), and write
\(f_i(\pi):=f_i^\lambda(\pi;\mu_{-i},\bar\eta)\).
Let \(\mathcal C_i:=\operatorname{cone}(\Delta(K_\varepsilon))\). 
\paragraph{Conic linear optimization problem.} ~
The conditional best-response problem is written as the following moment problem 
\[
\begin{aligned}
(P_i)\qquad W_i(z_i;\mu_{-i}):=
\sup_{\mu_i\in\mathcal C_i}\quad&
\int_{K_\varepsilon}f_i(\pi)\,\mu_i(d\pi)\\
\text{subject to}&\quad
\int_{K_\varepsilon}\pi\,\mu_i(d\pi)=p_{\mathcal E}, \quad 
\int_{K_\varepsilon}\operatorname{Var}_\pi(X)\,\mu_i(d\pi)=z_i.
\end{aligned}
\]
The constraints impose Bayes plausibility and consistency with the prescribed residual risk. The first constraint also implies that \(\mu_i\) has total mass one. Both the objective and constraints are linear in \(\mu_i\).

The corresponding dual problem is given by 
\[
\begin{aligned}
(D_i)\qquad
\inf_{\alpha\in\mathbb R^\tau,\;\beta\in\mathbb R}\quad&
\alpha\cdot p_{\mathcal E}+\beta z_i\\
\text{subject to}\quad&
\alpha\cdot\pi+\beta\operatorname{Var}_\pi(X)
\geq f_i(\pi)
\quad\forall\pi\in K_\varepsilon.
\end{aligned}
\]
The dual constructs an upper bound on the payoff using the moment functions. 
\paragraph{Strong duality.}
Denote the optimal values of $(P_i)$ and $(D_i)$ by $\operatorname{val}(P_i)$ and $\operatorname{val}(D_i)$, respectively. 
We see below that the primal maximum is attained, and strong duality holds:

\begin{proposition}[Attainment and strong duality]\label{prop:strong-duality}
Let $z_i$ be attainable and let $f_i$ be bounded and upper
semicontinuous on the compact posterior domain
$K_\varepsilon$. Then $(P_i)$ attains its maximum and $\operatorname{val}(P_i)=\operatorname{val}(D_i)$.
\end{proposition}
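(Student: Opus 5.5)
Attainment comes first. The feasible set of $(P_i)$ is
\[
\mathcal F_i:=\Bigl\{\mu_i\in\mathcal M_\varepsilon(p_{\mathcal E}):\int_{K_\varepsilon}\operatorname{Var}_\pi(X)\,\mu_i(d\pi)=z_i\Bigr\}.
\]
The first constraint forces unit mass, so every feasible $\mu_i$ lies in $\Delta(K_\varepsilon)$. The set $\mathcal F_i$ is nonempty because $z_i$ is attainable. It is weak$^*$ closed because $\pi\mapsto\pi$ and $\pi\mapsto\operatorname{Var}_\pi(X)$ are continuous on $K_\varepsilon$. Since it sits inside the weak$^*$ compact set $\mathcal D_i$ (Lemma~\ref{lem:admissible-experiment}), it is compact. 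The functional $\mu_i\mapsto\int f_i\,d\mu_i$ is weak$^*$ upper semicontinuous on $\Delta(K_\varepsilon)$ whenever $f_i$ is bounded and upper semicontinuous. To see this, write $f_i$ as a decreasing limit of continuous functions $g_m\downarrow f_i$ (Baire). Then $\int f_i\,d\mu=\inf_m\int g_m\,d\mu$ is an infimum of weak$^*$ continuous functionals, hence u.s.c. A u.s.c. function attains its maximum on a nonempty compact set, so the supremum is attained at some $\mu_i^\star$.

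Weak duality is routine. For any dual-feasible $(\alpha,\beta)$ and any primal-feasible $\mu_i$, integrate the dual constraint against $\mu_i$ to get
\[
\int f_i\,d\mu_i\le \alpha\cdot p_{\mathcal E}+\beta z_i .
\]
Hence $\operatorname{val}(P_i)\le\operatorname{val}(D_i)$.

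For the reverse inequality I will use a finite-dimensional separation argument on the moment set, rather than infinite-dimensional conic duality. Define
\[
\mathcal K:=\Bigl\{\bigl(\textstyle\int\pi\,d\mu,\ \int\operatorname{Var}_\pi(X)\,d\mu,\ \int f_i\,d\mu - s\bigr):\mu\in\mathcal C_i,\ s\ge0\Bigr\}\subset\mathbb R^{\tau}\times\mathbb R\times\mathbb R .
\]
This is a convex cone. Replacing $f_i$ by $g_m$ and using the monotone limit, it suffices to treat the case where $\mathcal K$ is closed. Closedness should follow from compactness of $K_\varepsilon$: the first block of coordinates controls the total mass, because $\sum_k\pi(k)=1$. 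So bounded moment vectors come from measures of bounded mass, which have weak$^*$ convergent subsequences, and upper semicontinuity of $\int f_i\,d\mu$ handles the last coordinate. With $\mathcal K$ closed, the point $(p_{\mathcal E},z_i,\operatorname{val}(P_i)+\delta)$ lies outside $\mathcal K$ for every $\delta>0$, by optimality. Strict separation then gives $(\alpha',\beta',t)$ with
\[
\alpha'\cdot\pi+\beta'\operatorname{Var}_\pi(X)+t f_i(\pi)\le 0\ \text{ on }K_\varepsilon,\qquad t\ge0,
\]
together with a strict inequality at the target point.

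The main obstacle is ruling out the degenerate case $t=0$, i.e. a vertical separating hyperplane. I would handle it with a Slater-type condition. Let $Z=[z_{\min},z_{\max}]=\mathcal Z_i(p_{\mathcal E})$. If $z_i$ lies in the interior of $Z$, then $(p_{\mathcal E},z_i)$ lies in the relative interior of the projected moment cone, because $p_{\mathcal E}$ is interior to $K_\varepsilon$ (as $\varepsilon<\min_k p_k$). That excludes $t=0$. Normalizing $t=1$ and setting $(\alpha,\beta)=-(\alpha',\beta')$ gives a dual-feasible pair with value at most $\operatorname{val}(P_i)+\delta$.

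The endpoint case $z_i\in\{z_{\min},z_{\max}\}$ needs separate treatment. The constraint $\int\operatorname{Var}\,d\mu=z_{\max}$ is the face of $\mathcal F_i$ where $\operatorname{Var}$ is maximized. In that case I would allow $\beta\to\pm\infty$ and argue that the dual infimum still equals the primal value. The reason is that $\beta\operatorname{Var}_\pi(X)$ plus a suitable affine term penalizes leaving that face, and $f_i$ is bounded. If this limiting argument fails, I would fall back on the fact that the value function $z\mapsto W_i(z;\mu_{-i})$ is concave and u.s.c. on $Z$, and obtain the endpoint values by continuity from the interior.
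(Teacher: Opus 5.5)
Your attainment argument is the paper's in substance. The paper uses a maximizing sequence and the portmanteau inequality, and your Baire approximation gives the same upper semicontinuity of $\mu\mapsto\int f_i\,d\mu$.

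For duality you take a different route. The paper rewrites the constraints with the moments $1,\pi(1),\ldots,\pi(\tau-1),v(\pi)$ and cites Corollary~3.1 of Shapiro (2001). You instead reprove that result: you show the moment cone $\mathcal K$ is closed and then separate. Your closedness argument is correct. The observation that $\sum_k\pi(k)=1$ lets the first block control total mass does exactly the job the constant moment does in Shapiro's hypothesis. That argument already covers upper semicontinuous $f_i$, so the detour through continuous $g_m$ is unnecessary. What you gain is a self-contained proof instead of a citation.

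\textbf{The gap.} As written, your proof does not cover $z_i\in\{z_{\min},z_{\max}\}$. The statement includes these cases, and the paper's risk search runs over the closed interval. The gap comes from a misdiagnosis: excluding $t=0$ needs only primal feasibility, not a Slater condition, because your separation is strict at the target point. Suppose $t=0$. Then $\alpha'\cdot\pi+\beta'v(\pi)\le 0$ on $K_\varepsilon$, while $\alpha'\cdot p_{\mathcal E}+\beta' z_i>0$. A feasible $\mu_i$ exists because $z_i$ is attainable, and integrating the first inequality against it gives $\alpha'\cdot p_{\mathcal E}+\beta' z_i\le 0$, a contradiction. So $t>0$ for every attainable $z_i$, and your normalization step finishes the proof with no endpoint analysis. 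A relative-interior condition matters only for attainment of the dual infimum, which the proposition does not assert; that is what your ``$\beta\to\pm\infty$'' intuition is really about.

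\textbf{The fallback.} Your fallback would not close the gap as stated. Freeze $f_i$ and write $w_P$ and $w_D$ for the primal and dual values as functions of $z$. Equality on the interior plus continuity of $w_P$ gives only $\lim_{z\to z_{\max}}w_D(z)=w_P(z_{\max})$. You would also need $w_D(z_{\max})\le\liminf_{z\to z_{\max}}w_D(z)$. That follows from concavity of $w_D$, since it is an infimum of affine functions of $z$ over a $z$-independent feasible set. Note also that the paper's $W_i(z;\mu_{-i})$ uses the $z$-dependent integrand $g_z$, so it need not be concave in $z$.
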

\paragraph{Computation of best response.} ~
The moment-constraint argument of \cite{pinelis2016extreme} gives an optimizer of the form 
\[
\mu^\star_i = \sum_{a=1}^{s} w_a \delta_{\pi^a}, \quad s \leq \tau+1, \quad w_a>0, \quad \sum w_a=1,
\]
where both $w_a$ and $\pi^a \in K_\varepsilon$ are chosen by optimization. We use this property and compute an approximation by adaptively generating posterior locations. This is an adaptation of the column-generation framework of \cite{mehrotra2013generating}.  

For a general nonmediator, write the fixed attainable residual risk $z_i$ as \(z\) and 
write
\[
v(\pi):=\operatorname{Var}_\pi(X),
\qquad
g_z(\pi):=f_i^\lambda(\pi;\mu_{-i},\bar\eta(z)).
\]

The initial pool $Q_0:=\{p_{\mathcal E}\}\cup
\operatorname{vert}(K_\varepsilon)$ supports every attainable
residual risk. Indeed, every admissible posterior is a convex combination of the vertices $q^k \in K_\varepsilon$ and 
the attainable residual-risk set $\mathcal{Z}_i(p_\mathcal{E})$ is a compact interval. 
Write the prior as a convex combination of the
vertices, 
\[p_{\mathcal E}=\sum_k a_k q^k, \quad a_k:=\frac{p_k-\varepsilon}{1-\tau \varepsilon},\] and set
$\mu^{\mathrm V}:=\sum_k a_k\delta_{q^k}$.
Expressing each posterior as a convex combination of vertices
and using concavity of $v$ shows that $\mu^{\mathrm V}$
minimizes expected posterior variance; Jensen's inequality
shows that $\delta_{p_{\mathcal E}}$ maximizes it.
Thus, writing
\[
z_{\min}:=\sum_k a_kv(q^k),
\qquad z_{\max}:=v(p_{\mathcal E}),
\]
we have
\[
\mu_t:=(1-t)\mu^{\mathrm V}+t\delta_{p_{\mathcal E}},
\qquad
R_i(\mu_t)=(1-t)z_{\min}+tz_{\max},
\qquad t\in[0,1].
\]
These experiments are Bayes plausible and supported on $Q_0$,
so $(P_Q)$ defined below is feasible for every attainable $z$ whenever
$Q\supseteq Q_0$.

Enlarge the candidate pool \(Q\) by adding consensus beliefs
obtained by combinations
of player \(i\)'s current candidate posteriors and posteriors
in the supports of the opponents' experiments.

For the enlarged pool
\(Q=\{\pi^1,\ldots,\pi^m\}\supseteq Q_0\),
solve the restricted linear program
\[
(P_Q)\qquad L_Q(z):=
\max_{w_a\ge0}
\left\{
\sum_a w_ag_z(\pi^a):
\sum_a w_a\pi^a=p_{\mathcal E},\quad
\sum_a w_av(\pi^a)=z
\right\}.
\]
The optimal weights define a feasible experiment
with residual risk \(z\).

Using the optimal dual multipliers \((\alpha_Q,\beta_Q)\), search for posteriors with positive residual
\[
r_{z,Q}(\pi)
:=g_z(\pi)-\alpha_Q\cdot\pi-\beta_Qv(\pi).
\]
Add such posteriors to \(Q\) and reoptimize the weights. We call this process pricing. 
A verified bound
\[
\max_{\pi\in K_\varepsilon}r_{z,Q}(\pi)\le\delta
\]
certifies 
\[
L_Q(z)\le W_i(z;\mu_{-i})\le L_Q(z)+\delta.
\]
Indeed, by finite-dimensional LP strong duality, $L_Q(z)=\alpha_Q\cdot p_{\mathcal E}+\beta_Qz$.
Hence the verified global residual bound implies,
for every feasible experiment with residual risk $z$,
\[
\int_{K_\varepsilon}g_z(\pi)\,\mu_i(d\pi)
\le \alpha_Q\cdot p_{\mathcal E}+\beta_Qz+\delta
=L_Q(z)+\delta.
\]

A full best response additionally requires optimization over all attainable risks:
\[
W_i^\star(\mu_{-i})
:=\sup_{z\in\mathcal Z_i(p_{\mathcal E})}W_i(z;\mu_{-i}).
\]
Retain the best feasible experiment \(\widehat\mu_i\) found and refine verified upper bounds over a partition of the risk interval. 
 
Write $u_i:=U_i^\lambda(\tilde\mu^k;\Gamma)$ and  $L_i:=U_i^\lambda(\widehat\mu_i,\tilde\mu_{-i}^k;\Gamma)$,
where \(\widehat\mu_i\) is the best verified feasible candidate retained. 
We maintain conservative computationally verified bounds
\[
\underline u_i\le u_i\le\overline u_i,
\qquad
\underline L_i\le L_i,
\qquad
W_i^\star(\tilde\mu_{-i}^k)\le\overline W_i^{\mathrm{num}}, 
\] and for $I \subset \mathcal Z_i(p_{\mathcal E})=[z_{\mathrm{min}},z_{\mathrm{max}}]$, $\overline B_i^{\mathrm{num}}(I)
\geq \sup_{z\in I}W_i(z;\tilde{\mu}_{-i}^k)$.
Lemma~\ref{lem:interval-payoff-bound} provides a construction of these interval bounds and the resulting global upper bound.
Fix $\xi>0$ and parameter $1\le\iota<\infty$.
Player $i$ reports a verified improvement if
\[
\underline L_i-\overline u_i>\frac{\xi}{2\iota},
\]
and a certified deviation bound if
\[
\overline W_i^{\mathrm{num}}-\underline u_i\le\xi.
\]
Both reports may hold simultaneously.

Concerning the mediator's best response, under Assumption~\ref{ass:mediator-tolerance}, the mediator's
tolerance does not affect the decision, information, and conference costs. Thus, for fixed $\mu_{-h}$
and $\lambda$, the conditional net-payoff integrand is
independent of $\mu_h$, which enters ex ante utility only
through integration. The mediator's best-response problem
therefore optimizes directly over Bayes-plausible experiments,
without a residual-risk consistency constraint or an outer
risk search.

\paragraph{Nonmediator subroutine $\mathcal{A}_i$.} ~

To balance pricing with refinement of verified global upper
bounds, the algorithm allocates at most
$T_i^{\mathrm{pricing}}>0$ seconds to each pricing call at a
selected residual-risk level $z$. If pricing remains unfinished
when this limit is reached, the algorithm retains the verified
results and proceeds to global-bound refinement or another
risk level. The unfinished pricing search may be resumed later.
Reaching the pricing time limit does not itself provide an
optimality or equilibrium certificate.

The pseudocode below omits these time-management details.

\vspace{6pt}
{\small 
\begin{algorithm}[H]
\caption{Nonmediator subroutine $\mathcal A_i$: general branch}
\label{alg:nonmediator}

\KwIn{Player $i\ne h$, fixed profile $\tilde\mu^k$,
posterior pool $Q_i\supseteq Q_0$, accuracy $\xi>0$,
and parameter $1\le\iota<\infty$.}
\KwOut{A feasible candidate with improvement and
certification flags or an inconclusive result if computation is stopped.}

$\widehat\mu_i\gets\tilde\mu_i^k$\;
Initialize verified bounds
$\underline u_i\le u_i\le\overline u_i$ and
$\underline L_i\gets\underline u_i$\;
$\mathcal P_i\gets\{[z_{\min},z_{\max}]\}$,
$\overline W_i^{\mathrm{num}}\gets0$\;

\Repeat{$\mathsf{improving}\lor\mathsf{certified}$}{
    Select or revisit an attainable risk
    $z\in I\in\mathcal P_i$\;

    Set $\bar\eta_i=H(z;\gamma_i)$\;
$S_i\gets Q_i$\;
$T_i^{\mathrm{new}}\gets
\left\{
p^{*,\Gamma}\bigl((\rho,\pi_{-i});\bar\eta(z)\bigr)
\;\middle|\;
\rho\in S_i,\quad
\pi_{-i}\in
\prod_{j\ne i}\operatorname{supp}(\tilde\mu_j^k)
\right\}$\;
$Q_i\gets Q_i\cup T_i^{\mathrm{new}}$\;

    Refresh $g_z(\pi^a)$ for $\pi^a\in Q_i$\;

    Advance the restricted-LP and pricing searches,
    adding verified positive-residual points to $Q_i$\;

    Verify proposed experiments and retain the best
    feasible candidate $\widehat\mu_i$ and its
    payoff lower bound $\underline L_i$\;

    Refine $\mathcal P_i$ and the valid interval bounds
    $\overline B_i^{\mathrm{num}}(I)$
    over all attainable risks and posteriors\;

    $\overline W_i^{\mathrm{num}}\gets
    \max_{I\in\mathcal P_i}
    \overline B_i^{\mathrm{num}}(I)$\;
    Refine the payoff enclosures\;

    $\mathsf{improving}\gets
    [\,\underline L_i-\overline u_i>\xi/(2\iota)\,]$\;
    $\mathsf{certified}\gets
    [\,\overline W_i^{\mathrm{num}}-\underline u_i
       \le\xi\,]$\;
}
\Return{$(\widehat\mu_i,
\mathsf{improving},\mathsf{certified})$}\;
\end{algorithm}
}

\paragraph{Certification of $\xi$-Nash equilibria.} ~
We certify $\xi$-Nash equilibrium at full reimbursement.  
Fix \(\lambda=0\), \(\xi>0\), and \(1\le\iota<\infty\), the algorithm starts from a feasible profile \(\tilde\mu^0\). At round \(k\), it keeps \(\tilde\mu^k\) fixed while calling the player subroutines. It stops if every player has a verified certificate
\[
W_i^\star(\tilde\mu_{-i}^k)-U_i^0(\tilde\mu^k;\Gamma)
\le \overline W_i^{\mathrm{num}}-\underline u_i
\le\xi.
\]
The resulting profile is therefore a \(\xi\)-Nash equilibrium.

Otherwise, the algorithm selects one player \(i_k\) with
\(\underline L_{i_k}-\overline u_{i_k}>\xi/(2\iota)\)
and implements only that player’s proposal:
\[
\tilde\mu^{k+1}
=(\widehat\mu_{i_k}^k,\tilde\mu_{-i_k}^k).
\]
It refreshes affected payoff bounds and certificates at the new profile. If no verified improvement is available and some certificates remain unresolved, it continues the unfinished calls; stopping at this stage gives an inconclusive result.

Under full reimbursement, \(\lambda=0\), the weighted-potential property implies
\[
\Phi(\tilde\mu^{k+1})-\Phi(\tilde\mu^k)
>\frac{\xi}{2\iota r_{\max}},
\qquad r_{\max}:=\max_i r_i.
\]
Hence, for any finite global upper bound \(\overline\Phi\), the number \(N\) of accepted updates satisfies
\[
N\le
\frac{2\iota r_{\max}}{\xi}
\bigl[\overline\Phi-\Phi(\tilde\mu^0)\bigr].
\]
Thus the procedure reaches a \(\xi\)-Nash equilibrium provided every subroutine call terminates with a verified improvement or certificate. For \(\lambda>0\), the same equilibrium certificate remains valid, but the potential argument does not establish finite termination.

\paragraph{Approximate equilibria with $\lambda>0$ and certificates.} ~
We also compute how much conference cost the mediator can retain while preserving the same profile as a $\xi$-Nash equilibrium.
For a fixed retained share \(\lambda\), define player \(i\)’s unilateral regret at a feasible profile \(\mu\) by
\[
\operatorname{Reg}_i^\lambda(\mu)
:=\sup_{\nu_i\in\mathcal D_i}
U_i^\lambda(\nu_i,\mu_{-i};\Gamma)
-U_i^\lambda(\mu;\Gamma).
\]

To seek the largest certified $\lambda$ preserving $\xi$-Nash equilibrium, fix a profile \(\mu^0\) satisfying
\begin{align}
\operatorname{Reg}_i^0(\mu^0)\le\xi_0\le\xi
\qquad\text{for every }i. \label{ineq:regret}
\end{align}
Keeping $\mu^0$ fixed, the algorithm tests whether the mediator's regret satisfies 
$\operatorname{Reg}_h^\lambda(\mu^0)\leq\xi$
at different retained shares $\lambda\in[0,1]$.
Since nonmediators' payoffs are independent of $\lambda$,
their regret satisfies \eqref{ineq:regret}.

Each trial is certified if a verified upper bound of the mediator's regret 
$\overline{\operatorname{Reg}}_h(\lambda_m)\leq\xi$,
rejected if a verified lower bound $\underline\Delta_m$
on the payoff gain from a feasible mediator deviation
satisfies $\underline\Delta_m>\xi$, and inconclusive otherwise.
Let \(\lambda_A\) denote the largest certified trial share.

\section{Numerical Example}\label{sec:numerical-e}
For computational tractability, our numerical example focuses on the binary-state, three-player case, where prior $p_\mathcal{E}(x_1)=1/2$ and $\varepsilon=0.0005$. 
We consider two equally likely states with payoff values
\(x_1=\tfrac12-\sqrt{s}\) and \(x_2=\tfrac12+\sqrt{s}\),
so that \(\mathbb E[X]=\tfrac12\) and
\(\operatorname{Var}(X)=s\).

The panels in Figure \ref{fig:combined}  show the weighted decision cost, total information acquisition cost, conference cost at $\xi$-Nash equilibrium, and certified sufficient planner reimbursement using $\lambda_A$. The certification tolerance is $\xi=0.05$, except for case A defined below at variance $s=0.10$, where $\xi=0.067$. 
The horizontal axis is variance, so moving right means greater state uncertainty. 

Player 1 is the mediator, and meets player 2 and then player 3. 
A larger $\gamma$ means a less tolerant player. 
In blue, A places the demanding nonmediator last, $\gamma=(0.1,0.1,10)$. In orange, B places  
the demanding nonmediator first, $\gamma=(0.1,10,0.1)$. In green, C has two demanding nonmediators, $\gamma=(0.1,10,10)$. 
For the parameter range considered, $\gamma_i=0.1$ satisfies the naive-listener condition. 
Other parameters are: $r_i=10$ for all $i \in [3]$, $\kappa_i=0.1$ for all $i\in [3]$, and $H(z;\gamma_i)=e^z/\gamma_i$.
Case C at $s=1.0$ is omitted because a $0.05$-Nash equilibrium certificate was not obtained. 

\begin{figure}[H]
\begin{center}
\includegraphics[width=16cm]{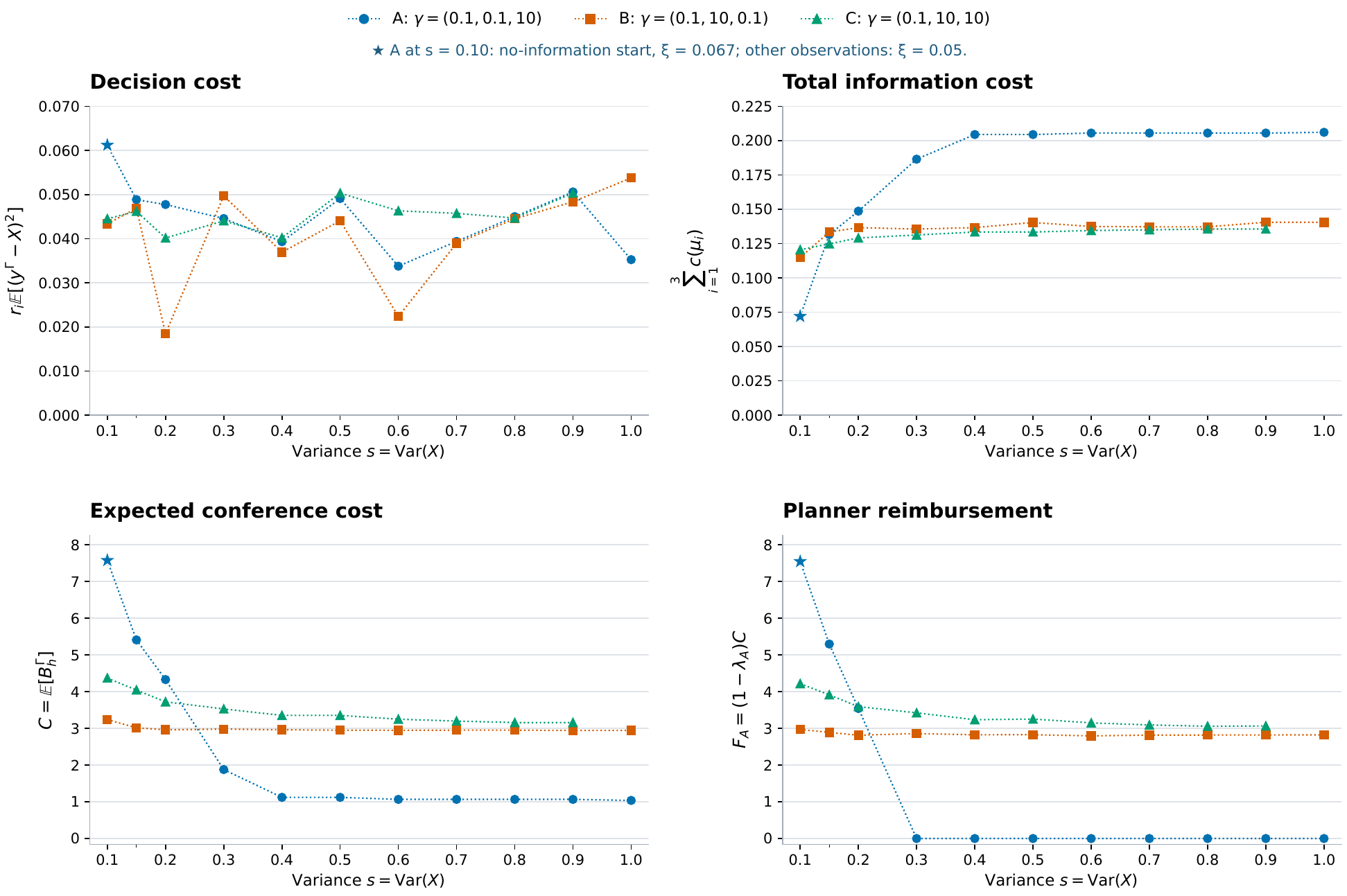} 
\caption{Decision outcomes and reimbursement across variance} 
\label{fig:combined}
\end{center}
\end{figure} 
The top left panel shows that having two demanding nonmediators does not uniformly improve decision accuracy. 
Case C produces less accurate decisions than both A and B at $s=0.40,0.50,0.60$ and $0.70$, although it is more accurate than both at $s=0.15$ and $0.30$.

The top right panel shows that the information cost substantially rises in A, while B and C remain relatively stable. 
The bottom left panel shows that A's conference cost falls sharply. 
At every plotted variance $s \geq 0.30$, it is lower than B and C.  

The player-level decomposition in Figure \ref{fig:by-player} shows that the mediator's information cost in A reached approximately 
$0.068$ at $s=0.30$ and remains at that level, whereas it is small in B and zero in C. Thus, A's conference-cost advantage is associated with 
substantially greater mediator information acquisition.

The bottom right panel shows that 
at every tested variance $s \geq 0.30$, A's certified reimbursement is zero. 
This lowers the planner's fiscal expenditure while assigning the conference cost to the mediator. 
A planner that also values reducing the mediator's private burden might therefore prefer B, despite its higher conference cost and higher certified sufficient 
reimbursement. Such a preference depends on how the planner weights decision accuracy, information acquisition, fiscal expenditure, and the distribution of costs. 

\begin{figure}[H]
\begin{center}
\includegraphics[width=16cm]{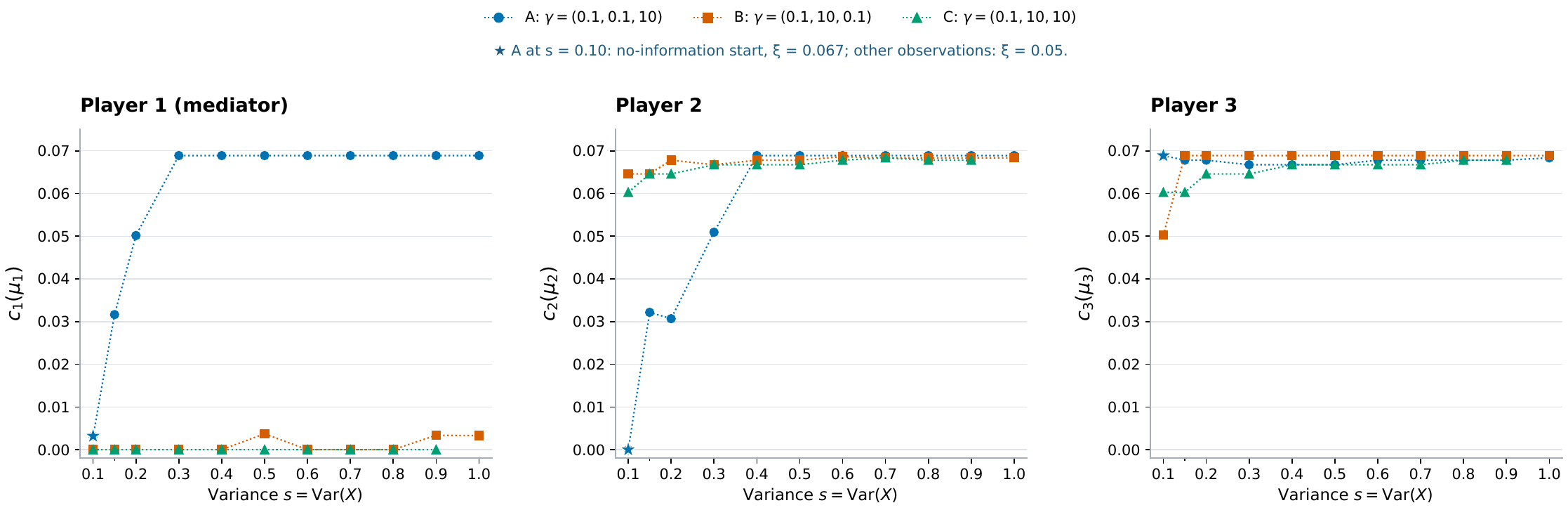} \\ 
\caption{Information costs by players} 
\label{fig:by-player}
\end{center}
\end{figure} 

\section{Future Work}\label{sec:future}
The study jointly examines three elements: (i) DeGroot-type learning with an acceptance rule based on KL divergence, (ii) conference structure supporting a global consensus, and (iii) ex ante private information acquisition that endogenously determines players' acceptance tolerances. 
In our environment, we establish sufficient conditions for consensus and 
prove the existence of a pure-strategy Nash equilibrium in the information-acquisition game under full reimbursement. 
 The  numerical examples suggest that placing the demanding player last in 
the bilateral conference cycle is associated with lower certified sufficient planner reimbursement at larger state variances, but 
greater information acquisition burden for the mediator. 

The computational analysis has three main limitations. First, the study does not establish a general upper bound on the total number of 
computational iterations to estimate computation time. Second, the computed $\xi$-Nash equilibrium selected by the procedure may depend on the initial profile and update rule, so the numerical comparisons need not hold across all equilibria. Third, the numerical analysis covers only three players and a limited range of conference structures. Future research could overcome these limitations.

\bibliography{jecon_ref3}
\begin{appendices}
\counterwithin{theorem}{section}
\counterwithin{lemma}{section}
\counterwithin{proposition}{section}
\numberwithin{equation}{section}

\section{Proof of Proposition \ref{prop:naive-listener}}

Fix a naive listener $M=i^*$ and write
\[
x_i(t):=\bigl(p_i(t)(E_1),\ldots,p_i(t)(E_\tau)\bigr),
\qquad
K:=\operatorname{co}\{x_i(0):i\in[n]\}.
\]
For probability vectors, write
$\|x-y\|_{\mathrm{TV}}:=\frac12\|x-y\|_1$.
Every update is a convex combination, so $x_i(t)\in K$ for all $i,t$. Initial strict positivity makes $K$ compact and bounded away from the boundary of the partition simplex.
Since $D_{\mathrm{KL}}$ is uniformly continuous on $K\times K$ and $D_{\mathrm{KL}}(x\|x)=0$, choose 
$\varepsilon_i\in(0,1]$ such that
\[
x,y\in K,\quad
\|x-y\|_{\mathrm{TV}}<\varepsilon_i
\quad\Longrightarrow\quad
D_{\mathrm{KL}}(x\|y)<\eta_i.
\]
Set
\[
\varepsilon_0:=\min_{i\in[n]}\varepsilon_i>0,
\qquad
\beta:=\varepsilon_0/4.
\]

\begin{lemma}\label{lem:alpha-range}
Suppose a bilateral stage between $M$ and $j\ne M$ starts at time $s$, with $u=x_M(s)$ and $v=x_j(s)$.
Repeating this conference until $j$ accepts $M$ takes finitely many updates and gives both players the partition belief
\[
\alpha u+(1-\alpha)v,
\qquad \alpha\in[\beta,1/2].
\]
\end{lemma}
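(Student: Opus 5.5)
The plan is to write the repeated bilateral conference explicitly as a one-dimensional recursion along the segment from $v$ to $u$. The target $j$'s belief stays fixed while $j$ rejects. We parametrize the mediator's belief as $x_M=v+a(u-v)$ with $a\in[0,1]$, starting from $a_0=1$.

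\textbf{Rejection step.} In a bilateral conference $|C|=2$. Since $M$ is a naive listener, $M$ always accepts $j$. If $j$ rejects $M$, the update rule gives $x_M\leftarrow x_M+\tfrac12(v-x_M)$ and leaves $x_j=v$. In terms of the parameter, $a_{k+1}=a_k/2$, so after $k$ rejections $x_M=v+2^{-k}(u-v)$.

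\textbf{Finite termination.} All beliefs stay in $K$, as the appendix notes. We have $\|x_M-v\|_{\mathrm{TV}}=2^{-k}\|u-v\|_{\mathrm{TV}}\le 2^{-k}$. Hence once $2^{-k}<\varepsilon_j$, the choice of $\varepsilon_j$ gives $D_{\mathrm{KL}}(x_M\|v)<\eta_j$, and $j$ accepts. So acceptance occurs at some first index $k^*$ with $2^{-k^*}\ge$ (roughly) $\varepsilon_j/2$. More precisely, $k^*$ is the least $k$ at which $j$ accepts, and $k^*\le\lceil\log_2(1/\varepsilon_j)\rceil+1$.

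\textbf{Acceptance step.} At acceptance both players accept each other; $M$ accepts by naivety. Both then move to the midpoint $v+\tfrac12 2^{-k^*}(u-v)$. Writing this as $\alpha u+(1-\alpha)v$ gives $\alpha=2^{-k^*-1}$. This is at most $1/2$, with equality exactly when $k^*=0$, i.e.\ when $j$ accepts $u$ immediately.

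\textbf{Lower bound on $\alpha$.} This is the main point to handle carefully. Minimality of $k^*$ gives that $j$ rejected at step $k^*-1$ when $k^*\ge1$, and the TV criterion then forces $2^{-(k^*-1)}\|u-v\|_{\mathrm{TV}}\ge\varepsilon_j$. Since $\|u-v\|_{\mathrm{TV}}\le1$, this yields $2^{-k^*}\ge\varepsilon_j/2$. Therefore $\alpha=2^{-k^*-1}\ge\varepsilon_j/4\ge\varepsilon_0/4=\beta$. If $k^*=0$ then $\alpha=1/2\ge\beta$ because $\varepsilon_0\le1$.

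The one subtlety is that the criterion is only a sufficient condition for acceptance. The argument must therefore use the contrapositive: rejection at step $k^*-1$ implies TV distance at least $\varepsilon_j$. That is exactly what the stated implication provides, since $x_M$ and $v$ lie in $K$. Finally, the nonparticipants' beliefs are unchanged, which completes the lemma.
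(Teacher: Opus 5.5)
Your proof is correct and follows the paper's argument. You use the same halving recursion $x_M=v+2^{-k}(u-v)$ while $j$ rejects, midpoint averaging at the first acceptance giving $\alpha=2^{-k^*-1}\le 1/2$, and the contrapositive of the $\varepsilon_j$-criterion at the last rejection to obtain $\alpha\ge\varepsilon_j/4\ge\beta$; the only cosmetic difference is that you derive finite termination from the explicit TV threshold rather than the paper's appeal to continuity of KL, which also gives you an explicit bound on the number of updates.
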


\begin{proof}
Let $\tilde t\ge s$ be the first time at which $j$ accepts $M$. Before this acceptance,
\[
x_j(t)=v,\qquad
x_M(t)=v+2^{-(t-s)}(u-v).
\]
If acceptance never occurred, the latter expression would converge to $v$, contradicting $\eta_j>0$ and continuity of KL. Thus $\tilde t<\infty$.
At the accepted update, both players adopt
\[
x_M(\tilde t+1)=x_j(\tilde t+1)
=v+\alpha(u-v),
\qquad
\alpha:=2^{-(\tilde t-s+1)}\le\tfrac12.
\]

If $\tilde t=s$, then $\alpha=1/2\ge\beta$.
Otherwise, rejection at $\tilde t-1$ and the defining property of $\varepsilon_j$ imply
\[
\|x_M(\tilde t-1)-v\|_{\mathrm{TV}}
\ge\varepsilon_j.
\]
Consequently,
\[
\alpha\|u-v\|_{\mathrm{TV}}
=\tfrac14\|x_M(\tilde t-1)-v\|_{\mathrm{TV}}
\ge\tfrac14\varepsilon_j.
\]
Since $\|u-v\|_{\mathrm{TV}}\le1$, this gives
$\alpha\ge\varepsilon_j/4\ge\beta$.
\end{proof}

\begin{proof}[Proof of Proposition \ref{prop:naive-listener}]
The case $n=1$ is immediate. Suppose $n\ge2$ and fix an order $j_1,\ldots,j_N$ of $[n]\setminus\{M\}$, where $N=n-1$. In each cycle, repeat the conference $\{M,j_r\}$ until $j_r$ accepts $M$, for $r=1,\ldots,N$.
By Lemma~\ref{lem:alpha-range}, every stage is finite and its completed update has coefficient $\alpha_r\in[\beta,1/2]$.

Write $x_i$ and $x_i^+$ for the partition beliefs before and after one complete cycle. Its update is
\[
x_i^+=\sum_{\ell\in[n]}A_{i\ell}x_\ell,
\]
where $A$ is row-stochastic. The coefficient of the initial listener belief is
\[
A_{j_rM}=\prod_{k=1}^{r}\alpha_k,
\qquad
A_{MM}=\prod_{k=1}^{N}\alpha_k.
\]
Hence $A_{iM}\ge\gamma$ for every $i\in[n]$, where
\[
\gamma:=\prod_{r=1}^{N}\alpha_r\ge\beta^N.
\]
Thus, for every $i\in[n]$, there exists $y_i\in\operatorname{co}\{x_\ell:\ell\in[n]\}$ such that
\[
x_i^+=\gamma x_M+(1-\gamma)y_i.
\]
It follows that
\[
\max_{i,j\in[n]}\|x_i^+-x_j^+\|_{\mathrm{TV}}
\le
(1-\beta^N)
\max_{i,j\in[n]}\|x_i-x_j\|_{\mathrm{TV}}.
\]

Let $T_q$ be the starting time of cycle $q$, with $T_0=0$, and define
\[
D_q:=\max_{i,j\in[n]}
\|x_i(T_q)-x_j(T_q)\|_{\mathrm{TV}}.
\]
Every $T_q$ is finite, and the contraction gives
\[
D_q\le c^qD_0,
\qquad c:=1-\beta^N\in(0,1).
\]
Choose a finite $q_*$ such that $D_{q_*}<\varepsilon_0$ and set $T:=T_{q_*}$. Then
\[
D_{\mathrm{KL},\mathcal E}(p_j(T)\|p_i(T))
=
D_{\mathrm{KL}}(x_j(T)\|x_i(T))
<\eta_i
\qquad\forall i,j\in[n].
\]
Therefore, everyone accepts everyone else at time $T$.
The grand conference $C(T)=[n]$ gives
\[
p_i(T+1)=\frac1n\sum_{j=1}^{n}p_j(T)
\qquad\forall i\in[n],
\]
which is exact global consensus.
\end{proof}

\section{Proof of Lemma \ref{lem:mediated-consensus-continuity}}
\begin{lemma}\label{lem:mediated-pair-continuity}
For $\eta_i>0$, the coefficient $\theta_i(u,v;\eta_i)$ and the completed agreement $F_i(u,v;\eta_i)$ are jointly continuous in $(u,v,\eta_i)$. In the rejection case, the boundary opinion $a_i:=v+\theta_i(u,v;\eta_i)(u-v)$ is uniquely determined by the segment restriction.
\end{lemma}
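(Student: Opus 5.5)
The plan is to study the scalar function
\[
\phi(\theta;u,v):=D_{\mathrm{KL}}\bigl(v+\theta(u-v)\,\|\,v\bigr),\qquad \theta\in[0,1],
\]
and derive everything from its shape. For fixed $u\neq v$ in $K_\varepsilon$, the point $v+\theta(u-v)$ stays in $K_\varepsilon$ because $K_\varepsilon$ is convex. Hence $\phi$ is finite and jointly continuous in $(\theta,u,v)$. It is also strictly convex in $\theta$, since $q\mapsto D_{\mathrm{KL}}(q\|v)$ is strictly convex along a nondegenerate segment. Moreover $\phi(0)=0$ and $\phi'(0)=0$. So $\phi$ is strictly increasing on $[0,1]$ whenever $u\ne v$.

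Consequently, in the rejection case $\phi(1)>\eta_i$, the equation $\phi(\theta)=\eta_i$ has exactly one root in $(0,1)$, and $\theta_i$ is that root. This gives the uniqueness of the boundary opinion $a_i=v+\theta_i(u-v)$ on the segment. In the acceptance case, $\theta_i=1$. In both cases
\[
\theta_i(u,v;\eta_i)=\min\{1,\ \phi(\cdot;u,v)^{-1}(\eta_i)\},
\]
with the inverse taken on the strictly increasing branch and extended by $+\infty$ beyond $\phi(1)$.

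Continuity at a point $(u_0,v_0,\eta_0)$ with $u_0\ne v_0$ then follows from a standard implicit-root argument. Take a sequence $(u_m,v_m,\eta_m)\to(u_0,v_0,\eta_0)$ and any subsequential limit $\bar\theta$ of $\theta_m:=\theta_i(u_m,v_m;\eta_m)$. Passing to the limit in $\phi(\theta_m;u_m,v_m)\le\eta_m$, together with either $\theta_m=1$ or equality $\phi(\theta_m;u_m,v_m)=\eta_m$, gives two facts:
\[
\phi(\bar\theta;u_0,v_0)\le\eta_0,
\qquad
\bar\theta<1\ \Longrightarrow\ \phi(\bar\theta;u_0,v_0)=\eta_0.
\]
By strict monotonicity, these two facts identify $\bar\theta=\theta_i(u_0,v_0;\eta_0)$. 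The same argument covers the borderline case $\phi(1;u_0,v_0)=\eta_0$, where both branches give $1$.

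The main obstacle is the degenerate diagonal $u_0=v_0$. There $\phi\equiv0$, so $\theta_i=1$, yet the inverse-function description breaks down. I would handle it with a uniform quadratic bound. On $K_\varepsilon$, Pinsker-type reverse bounds give
\[
D_{\mathrm{KL}}(q\|v)\le \sum_k\frac{(q_k-v_k)^2}{v_k}\le \varepsilon^{-1}\|q-v\|_2^2 .
\]
Hence $\phi(1;u_m,v_m)\le\varepsilon^{-1}\|u_m-v_m\|^2\to0<\eta_0$. Thus, near the diagonal, the target eventually accepts the mediator outright, and $\theta_m=1$ for large $m$. So $\theta_i$ is continuous there too.

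Finally, $F_i=v+\tfrac{\theta_i}{2}(u-v)$ is a composition of continuous maps, which gives its joint continuity. The boundary opinion $a_i$ is continuous as well, which will be useful for Lemma~\ref{lem:mediated-consensus-continuity}.
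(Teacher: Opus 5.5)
Your proof is correct and follows essentially the same route as the paper: both show that $\theta\mapsto D_{\mathrm{KL}}(v+\theta(u-v)\|v)$ is strictly increasing from $\phi(0)=\phi'(0)=0$ and strict convexity (the paper computes $g''(\theta)=\sum_k (u(k)-v(k))^2/(v(k)+\theta(u(k)-v(k)))>0$, you invoke strict convexity of KL), take the unique root in the rejection case, and handle the diagonal $u=v$ by $\theta_i\equiv 1$ nearby. Your subsequence argument and the explicit bound $D_{\mathrm{KL}}(q\|v)\le\varepsilon^{-1}\|q-v\|_2^2$ simply spell out the root-continuity and diagonal-neighbourhood steps that the paper only asserts.
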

\begin{proof}
If $u\ne v$, let
$g(\theta)=D_{\mathrm{KL}}(v+\theta(u-v)\|v)$.
All coordinates are bounded below by $\varepsilon$, and
\[
g(0)=g'(0)=0,\qquad
g''(\theta)
=\sum_{k=1}^{\tau}
\frac{(u(k)-v(k))^2}{v(k)+\theta(u(k)-v(k))}>0.
\]
Hence $g$ is continuous and strictly increasing on $[0,1]$. If $g(1)>\eta_i>0$, it has a unique root $g(\theta_i)=\eta_i$ in $(0,1)$. Joint continuity of $g$ and strict monotonicity give continuity of this root. At $g(1)=\eta_i$, the rejection-side root tends to $1$, which matches the immediate-acceptance branch. If $u=v$, then $g(1)=0<\eta_i$, so $\theta_i=1$ in a neighborhood of that point. Formula 
$F_i(u,v;\eta_i)
:=
v+\frac{\theta_i(u,v;\eta_i)}{2}(u-v)$ 
 then gives continuity of $F_i$.
\end{proof}

\begin{proof}[Proof of Lemma \ref{lem:mediated-consensus-continuity}]
Fix $\eta^0\in(0,\infty)^n$ and a compact neighborhood $V\subset(0,\infty)^n$ of $\eta^0$. By Lemma~\ref{lem:mediated-pair-continuity}, each $\theta_i$ is continuous and strictly positive. Compactness therefore gives constants $\beta>0$ and $\underline\eta>0$ such that
\[
\frac{\theta_i(u,v;\eta_i)}2\in[\beta,1/2],
\qquad
\min_i\eta_i\geq\underline\eta,
\]
uniformly over $u,v\in K_\varepsilon$, $\eta\in V$, and targets $i\ne h$.

Continue the bilateral dynamics without applying the stopping rule. Let $m\in\mathbb N_0$ denote the number of completed
bilateral cycles. The cycle-contraction argument in the proof of Proposition~\ref{prop:naive-listener} gives
\[
\max_{a,b\in[n]}
\bigl\|q_a^{m(n-1)}-q_b^{m(n-1)}\bigr\|_1
\leq 2c^m,
\qquad c:=1-\beta^{n-1}<1.
\]
Since $D_{\mathrm{KL}}(u\|v)\leq\varepsilon^{-1}\|u-v\|_1^2$, choose $M$ such that $4c^{2M}/\varepsilon\leq\underline\eta$.
Then
$\ell^\Gamma(\pi;\eta)\leq r':=M(n-1)$
uniformly over $K_\varepsilon^n\times V$.

After the stopping condition first holds, continued bilateral updates remain in the convex hull of the opinion profile at that 
stopping time. Joint convexity of KL therefore preserves the stopping condition. These updates are consequently midpoint updates and preserve the global average. Hence, even if the actual protocol stops earlier,
\[
p^{*,\Gamma}(\pi;\eta)
=\frac1n\sum_{j=1}^n q_j^{r'}(\pi;\eta).
\]
By Lemma~\ref{lem:mediated-pair-continuity}, this is a fixed finite composition of continuous bilateral updates
followed by averaging. Thus $p^{*,\Gamma}$ is jointly continuous, as is $y^\Gamma=\sum_k x_k p^{*,\Gamma}(k)$.
Since $\eta^0$ was arbitrary, the conclusion follows.
\end{proof}

\section{Proof of Lemma \ref{lem:stopped-cost-regularity}}

\begin{lemma}[Weak$^*$ compactness of admissible experiments]
\label{lem:admissible-experiment}
The probability-measure space $\Delta(K_\varepsilon)$ and the admissible experiment set
\[
\mathcal D_i
:=\left\{\mu_i\in\Delta(K_\varepsilon):
\int_{K_\varepsilon}\pi\,\mu_i(d\pi)=p_{\mathcal E}\right\}
\]
are weak$^\ast$ compact in $C(K_\varepsilon)^*$.
Moreover, $\mathcal D_i$ is nonempty and convex.
Consequently, $\mathcal D:=\prod_{i=1}^n\mathcal D_i$ is nonempty, compact, and convex in the product weak$^\ast$ topology.
\end{lemma}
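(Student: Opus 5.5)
The plan is to obtain compactness from the Banach–Alaoglu theorem (equivalently, Riesz representation together with Prokhorov's theorem on the compact metric space $K_\varepsilon$), and then to show that $\mathcal D_i$ is a weak$^\ast$ closed subset of the compact set $\Delta(K_\varepsilon)$.

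First, since $K_\varepsilon$ is a compact subset of the simplex in $\mathbb R^\tau$, the Riesz representation theorem identifies $C(K_\varepsilon)^*$ with $\mathfrak B(K_\varepsilon)$. The set $\Delta(K_\varepsilon)$ lies in the closed unit ball of this dual space. It is cut out by the conditions
\[
\langle g,\zeta\rangle\ge0 \quad\text{for every } g\ge0,
\qquad
\langle 1,\zeta\rangle=1 .
\]
Each of these conditions is weak$^\ast$ closed, since it is defined through evaluation against a fixed continuous function. By Banach–Alaoglu, $\Delta(K_\varepsilon)$ is therefore weak$^\ast$ compact.

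Next, for each coordinate $k$, the map $\pi\mapsto\pi(k)$ is continuous on $K_\varepsilon$. Hence the functional $\mu\mapsto\int\pi(k)\,\mu(d\pi)$ is weak$^\ast$ continuous. The set $\mathcal D_i$ is the intersection of $\Delta(K_\varepsilon)$ with finitely many level sets of such functionals, so it is weak$^\ast$ closed and therefore compact.

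Convexity is immediate from linearity of the barycenter constraint in $\mu$. For nonemptiness, the assumption $\varepsilon<\min_k p_k$ gives $p_{\mathcal E}\in K_\varepsilon$, so $\delta_{p_{\mathcal E}}\in\mathcal D_i$.

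Finally, Tychonoff's theorem gives compactness of the product $\mathcal D$. Products of nonempty convex sets are nonempty and convex.

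The only mildly delicate point is the identification of the weak$^\ast$ topology on $\Delta(K_\varepsilon)$ with the topology of weak convergence of measures; this is where compactness of $K_\varepsilon$ is used. Since $K_\varepsilon$ is metrizable, one may also argue sequentially via Prokhorov's theorem. Either route is standard, so I expect no real obstacle.
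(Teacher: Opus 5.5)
Your proposal is correct and follows the paper's proof essentially step for step. Both identify $C(K_\varepsilon)^*$ via Riesz and apply Banach--Alaoglu to the unit ball; both obtain weak$^\ast$ closedness from the positivity, mass, and coordinate-moment constraints; both get nonemptiness from $\delta_{p_{\mathcal E}}$, convexity from linearity, and compactness of the finite product.
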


\begin{proof}
Let $E=C(K_\varepsilon)$, equipped with the supremum norm, and identify $E^*$ with finite signed Borel measures
by the Riesz representation theorem. By Banach--Alaoglu, the closed unit ball
\[
\mathbb B_{E^*}
:=\{\zeta\in E^*:\|\zeta\|_{E^*}\le1\}
\]
is weak$^\ast$ compact.

Every probability measure has dual norm one, and
\[
\Delta(K_\varepsilon)
=
\left\{\zeta\in\mathbb B_{E^*}:
\langle1,\zeta\rangle=1,\quad
\langle g,\zeta\rangle\ge0
\text{ for every }g\in E,\ g\ge0
\right\}.
\]
For each fixed $g\in E$, the map $\zeta\mapsto\langle g,\zeta\rangle$ is weak$^\ast$ continuous.
Thus the mass and positivity conditions define a weak$^\ast$ closed subset of $\mathbb B_{E^*}$.
Hence $\Delta(K_\varepsilon)$ is weak$^\ast$ compact.

For each $k$, the coordinate function $g_k(\pi):=\pi(k)$ is continuous, so
\[
\mathcal D_i
=
\Delta(K_\varepsilon)\cap
\bigcap_{k=1}^{\tau}
\{\zeta\in E^*:
\langle g_k,\zeta\rangle=p_{\mathcal E}(k)\}
\]
is weak$^\ast$ closed and therefore compact.
Since $p_{\mathcal E}\in K_\varepsilon$, $\delta_{p_{\mathcal E}}\in\mathcal D_i$, proving nonemptiness. Convexity follows from linearity of the posterior-mean constraint.
The finite product $\mathcal D$ is therefore nonempty, compact, and convex.
\end{proof}

\begin{proof}[Proof of Lemma \ref{lem:stopped-cost-regularity}]
Consider the bilateral continuation without applying the stopping rule, so that $q^s$ and $\theta_r^\Gamma$ are defined at every stage. Write
\[
d(q):=\max_{a,b\in[n]}D_{\mathrm{KL}}(q_a\|q_b),
\qquad
\eta_{\min}:=\min_{i\in[n]}\eta_i.
\]
Every finite-stage profile $q^s$ and coefficient $\theta_r^\Gamma$ is continuous in $(\pi,\eta)$ by
Lemma~\ref{lem:mediated-pair-continuity}.
Since $d$ and $\eta_{\min}$ are continuous, for $r\ge1$ the set
\[
\{\ell^\Gamma\ge r\}
=
\bigcap_{s=0}^{r-1}
\{d(q^s)>\eta_{\min}\}
\]
is open. Its indicator is therefore lower semicontinuous.
Because $2-\theta_r^\Gamma$ is continuous and nonnegative, each term in
\[
B_h^\Gamma
=
1+\sum_{r=1}^{\infty}
\mathbf1_{\{\ell^\Gamma\ge r\}}
(2-\theta_r^\Gamma)
\]
is lower semicontinuous. The partial sums increase to $B_h^\Gamma$, so $B_h^\Gamma$ is jointly lower
semicontinuous. Moreover, $1\le2-\theta_r^\Gamma<2$ gives
\[
1+\ell^\Gamma
\le B_h^\Gamma
\le1+2\ell^\Gamma.
\]

The proof of Lemma~\ref{lem:mediated-consensus-continuity} gives a finite stage bound locally uniformly in the positive tolerance vector and uniformly over initial profiles. The preceding inequality therefore gives local boundedness of $B_h^\Gamma$.

It remains to obtain a common bound across all admissible experiments. Since $\pi\mapsto\operatorname{Var}_\pi(X)$ is continuous,
\[
R_i(\mu_i)
=\int_{K_\varepsilon}
\operatorname{Var}_\pi(X)\,\mu_i(d\pi)
\]
is weak$^\ast$ continuous.
By Lemma~\ref{lem:admissible-experiment}, its image $\mathcal Z_i(p_{\mathcal E})=R_i(\mathcal D_i)$ is nonempty and compact.
Assumption~\ref{ass:positive-endogenous-tolerance} therefore implies
\[
\begin{aligned}
\underline\eta
&:=\min_{i\in[n]}
\min_{z\in\mathcal Z_i(p_{\mathcal E})}
H(z;\gamma_i)>0,\\
\overline\eta
&:=\max_{i\in[n]}
\max_{z\in\mathcal Z_i(p_{\mathcal E})}
H(z;\gamma_i)<\infty.
\end{aligned}
\]
Thus every attainable tolerance vector lies in the compact set
\[
\mathcal T:=[\underline\eta,\overline\eta]^n
\subset(0,\infty)^n.
\]
Cover $\mathcal T$ by finitely many neighborhoods on which the proof of Lemma~\ref{lem:mediated-consensus-continuity}
provides uniform stage bounds $M_1,\ldots,M_L$.
Then $M:=\max_{\ell=1,\ldots,L}M_\ell$ bounds the number of bilateral stages for every admissible experiment profile and every initial posterior profile.
Consequently,
\[
1\le
B_h^\Gamma(\pi;\eta[\mu;\gamma])
\le1+2M.
\]
\end{proof}

\section{Lemma \ref{lem:acquisition-payoff-regularity}}
\begin{proof}
The weak$^\ast$ topology on $\Delta(K_\varepsilon)$ is metrizable because $K_\varepsilon$ is compact metric \citep[][Thm 15.11]{aliprantis2006infinite}. It suffices to take an arbitrary sequence satisfying
\begin{equation}\label{eq:acquisition-weakstar-sequence}
\mu^m,\mu\in\mathcal D,\qquad
\int g\,d\mu_j^m\longrightarrow\int g\,d\mu_j
\quad(g\in C(K_\varepsilon),\ j\in[n]).
\end{equation}

\medskip
\noindent\textit{(i) Conditional posterior laws.}
For $k\in\Theta$, define
\[
\rho_{j,\mu}^k(dq):=\frac{q(k)}{p_k}\mu_j(dq),\qquad
\rho_\mu^k:=\bigotimes_{j=1}^n\rho_{j,\mu}^k,
\qquad
\nu_\mu=\sum_{k=1}^{\tau}p_k\rho_\mu^k.
\]
Bayes plausibility gives
\[
\rho_{j,\mu}^k(K_\varepsilon)
=\frac1{p_k}\int_{K_\varepsilon}q(k)\,\mu_j(dq)=1.
\]
For $g\in C(K_\varepsilon)$,
\[
q\longmapsto g(q)\frac{q(k)}{p_k}\in C(K_\varepsilon),
\]
and hence
\begin{align*}
\int g\,d\rho_{j,\mu^m}^k
&=\int g(q)\frac{q(k)}{p_k}\,\mu_j^m(dq)\\
&\longrightarrow
\int g(q)\frac{q(k)}{p_k}\,\mu_j(dq)
=\int g\,d\rho_{j,\mu}^k.
\end{align*}

For the product measures, let
\[
\mathscr A
:=\left\{
\sum_{a=1}^N\prod_{j=1}^n g_{ja} \circ pr_j:
N\in\mathbb N,\ g_{ja}\in C(K_\varepsilon)
\right\}, \quad pr_j(\pi)=\pi_j \quad \pi \in (K_\varepsilon)^n.
\]
The algebra $\mathscr A$ contains constants and separates points. By Stone--Weierstrass,
\[
\overline{\mathscr A}^{\|\cdot\|_\infty}
=C(K_\varepsilon^n).
\]
For $F(\pi)=\sum_{a=1}^N\prod_jg_{ja}(\pi_j)\in\mathscr A$,
\begin{align*}
\int F\,d\rho_{\mu^m}^k
&=\sum_{a=1}^N\prod_{j=1}^n
\int g_{ja}\,d\rho_{j,\mu^m}^k\\
&\longrightarrow
\sum_{a=1}^N\prod_{j=1}^n
\int g_{ja}\,d\rho_{j,\mu}^k
=\int F\,d\rho_\mu^k.
\end{align*}
Thus, for $g\in C(K_\varepsilon^n)$ and $F\in\mathscr A$,
\[
\limsup_m\left|
\int g\,d\rho_{\mu^m}^k-\int g\,d\rho_\mu^k
\right|
\leq2\|g-F\|_\infty.
\]
Taking the infimum over $F\in\mathscr A$: 
\[
\inf_{F \in \mathscr A} \|g-F\|_\infty = 0
\]
gives
\[
\rho_{\mu^m}^k\overset{*}{\rightharpoonup}\rho_\mu^k
\quad(k\in\Theta),\qquad
\nu_{\mu^m}=\sum_kp_k\rho_{\mu^m}^k
\overset{*}{\rightharpoonup}\sum_kp_k\rho_\mu^k=\nu_\mu.
\]

\medskip
\noindent\textit{(ii) Endogenous tolerances and augmented laws.}
Set
\[
v(q):=\Var_q(X)
=\sum_kq(k)x_k^2-\left(\sum_kq(k)x_k\right)^2
\in C(K_\varepsilon).
\]
Then
\begin{align*}
R_j(\mu_j^m)
&=\int v\,d\mu_j^m
\longrightarrow\int v\,d\mu_j=R_j(\mu_j),\\
\eta_j^m:=H(R_j(\mu_j^m);\gamma_j)
&\longrightarrow H(R_j(\mu_j);\gamma_j)=:\eta_j>0.
\end{align*}
Choose
\[
T:=\prod_{j=1}^n[\eta_j/2,3\eta_j/2]
\subset(0,\infty)^n.
\]
There is $m_0$ such that $\eta^m,\eta\in T$ for $m\geq m_0$.
For $\psi\in C(T)$,
\[
\int_T\psi\,d\delta_{\eta^m}
=\psi(\eta^m)\longrightarrow\psi(\eta)
=\int_T\psi\,d\delta_\eta,
\qquad
\delta_{\eta^m}\overset{*}{\rightharpoonup}\delta_\eta.
\]
Write
\[
\Lambda_m^k:=\rho_{\mu^m}^k\otimes\delta_{\eta^m},
\qquad
\Lambda^k:=\rho_\mu^k\otimes\delta_\eta.
\]
For $g\in C(K_\varepsilon^n\times T)$, uniform continuity on the compact domain gives
\[
\sup_{\pi\in K_\varepsilon^n}
|g(\pi,\eta^m)-g(\pi,\eta)|\longrightarrow0.
\]
Consequently,
\begin{align*}
\left|\int g\,d\Lambda_m^k-\int g\,d\Lambda^k\right|
&\leq\sup_{\pi\in K_\varepsilon^n}
|g(\pi,\eta^m)-g(\pi,\eta)|\\
&\quad+
\left|\int g(\pi,\eta)\,\rho_{\mu^m}^k(d\pi)
-\int g(\pi,\eta)\,\rho_\mu^k(d\pi)\right|\\
&\longrightarrow0.
\end{align*}
Therefore, on $K_\varepsilon^n\times T$,
\begin{align*}
\Lambda_m^k&\overset{*}{\rightharpoonup}\Lambda^k,\\
\Lambda_m:=\nu_{\mu^m}\otimes\delta_{\eta^m}
&=\sum_kp_k\Lambda_m^k
\overset{*}{\rightharpoonup}\sum_kp_k\Lambda^k
=\nu_\mu\otimes\delta_\eta=:\Lambda.
\end{align*}

\medskip
\noindent\textit{(iii) Continuous decision loss and acquisition cost.}
For every experiment profile, define
\begin{align}
L(\mu)
&:=\sum_{k=1}^{\tau}p_k
\int_{K_\varepsilon^n}
\bigl(y^\Gamma(\pi;\eta[\mu;\gamma])-x_k\bigr)^2
\,\rho_\mu^k(d\pi),\label{eq:acquisition-ex-ante-loss}\\
C(\mu)
&:=\int_{K_\varepsilon^n}
B_h^\Gamma(\pi;\eta[\mu;\gamma])\,\nu_\mu(d\pi).
\label{eq:acquisition-ex-ante-cost}
\end{align}
The law of $(J,\Pi)$ is
$\Pr_\mu\{J=k,\Pi\in d\pi\}=p_k\rho_\mu^k(d\pi)$; thus
$L(\mu)=\E_\mu[(y^\Gamma-X)^2]$ under this joint law.
By Lemma~\ref{lem:mediated-consensus-continuity},
\[
\ell_k(\pi,t):=\bigl(y^\Gamma(\pi;t)-x_k\bigr)^2
\in C(K_\varepsilon^n\times T).
\]
With $\Delta_x:=\max_kx_k-\min_kx_k$,
\[
0\leq\ell_k(\pi,t)\leq\Delta_x^2,
\qquad
L(\mu^m)=\sum_kp_k\int\ell_k\,d\Lambda_m^k
\longrightarrow\sum_kp_k\int\ell_k\,d\Lambda^k=L(\mu).
\]
Also,
\[
d_p(q):=D_{\mathrm{KL}}(q\|p_{\mathcal E})
=\sum_kq(k)\log\frac{q(k)}{p_k}
\in C(K_\varepsilon),
\]
so
\[
c_i(\mu_i^m)=\kappa_i\int d_p\,d\mu_i^m
\longrightarrow\kappa_i\int d_p\,d\mu_i=c_i(\mu_i).
\]

\medskip
\noindent\textit{(iv) Portmanteau inequality for conference cost.}
By Lemma~\ref{lem:stopped-cost-regularity}, for some finite $M_T$,
\[
B_h^\Gamma\text{ is lower semicontinuous on }K_\varepsilon^n\times T,
\qquad 1\leq B_h^\Gamma\leq1+2M_T.
\]
Theorem~15.5 of \cite{aliprantis2006infinite} gives
\begin{align*}
C(\mu)
&=\int B_h^\Gamma\,d\Lambda\\
&\leq\liminf_{m\to\infty}\int B_h^\Gamma\,d\Lambda_m
=\liminf_{m\to\infty}C(\mu^m).
\end{align*}

\medskip
\noindent\textit{(v) Regularity of the reimbursed utilities.}
The ex ante utilities satisfy
\begin{equation}\label{eq:acquisition-payoff-decomposition}
U_i^\lambda(\mu;\Gamma)
=-r_iL(\mu)-c_i(\mu_i)-\lambda \mathbf{1}_{\{i=h\}}C(\mu).
\end{equation}
Thus continuity follows when $\lambda \mathbf{1}_{\{i=h\}}=0$.
For the mediator with $\lambda>0$,
\begin{align*}
\limsup_{m\to\infty}U_h^\lambda(\mu^m;\Gamma)
&=-r_hL(\mu)-c_h(\mu_h)-\lambda\liminf_{m\to\infty}C(\mu^m)\\
&\leq-r_hL(\mu)-c_h(\mu_h)-\lambda C(\mu)
=U_h^\lambda(\mu;\Gamma).
\end{align*}
For every $\rho\in K_\varepsilon$, the version specified in
\eqref{eq:posterior-conditional-law} is
\[
\Pr_\mu(J=k,\Pi_{-i}\in d\pi_{-i}\mid\Pi_i=\rho)
=\rho(k)\bigotimes_{j\ne i}\rho_{j,\mu}^k(d\pi_j).
\]
The factors are probability measures by Bayes plausibility, and integrating this kernel against $\mu_i(d\rho)$ recovers the joint law.
Thus it specifies a conditional law even outside
$\operatorname{supp}(\mu_i)$.
For fixed opponents' experiments and tolerances, the product measures are fixed and the weights $\rho(k)$ are continuous.
Continuity and boundedness of the squared decision loss give continuity of $V_i^\Gamma$ in $\rho$ by dominated convergence.
Lower semicontinuity and nonnegativity of $B_h^\Gamma$ give lower semicontinuity of $C_i$ by Fatou's lemma.
The assertions for $f_i^\lambda$ follow.
\end{proof}
\section{Proof of Proposition \ref{prop:acquisition-equilibrium-existence}}
\begin{proof}
By Lemma~\ref{lem:admissible-experiment}, $\mathcal D=\prod_{i=1}^n\mathcal D_i$ is nonempty
and compact in the product weak$^\ast$ topology.
The proof of Lemma~\ref{lem:acquisition-payoff-regularity} establishes continuity of $L$ and each $c_i$, and hence of $\Phi$.
Therefore $\Phi$ attains its maximum on $\mathcal D$.
Let $\mu^\ast$ be any global maximizer.
For every player $i$ and every admissible deviation $\widehat\mu_i\in\mathcal D_i$,
\[
\begin{aligned}
&U_i^0(\widehat\mu_i,\mu_{-i}^\ast;\Gamma)
-U_i^0(\mu^\ast;\Gamma)\\
&\qquad=r_i\left[
\Phi(\widehat\mu_i,\mu_{-i}^\ast)-\Phi(\mu^\ast)\right]\leq0.
\end{aligned}
\]
Hence $\mu_i^\ast\in\operatorname{BR}_i^\Gamma(\mu_{-i}^\ast;\gamma,0)$ for all $i$.
This proves existence without assuming quasiconcavity of payoffs in the experiment choice.
\end{proof}

\section{Proof of Proposition \ref{prop:existence-costly-information-case}}
\begin{proof}
First we show
\[
L(\mu_i,\mu_{-i}^0)\geq R_i(\mu_i).
\]
Put \(p=p_{\mathcal E}\), and define
\[
m(\pi):=\sum_k\pi(k)x_k,
\qquad
m_0:=\sum_kp(k)x_k,
\qquad
V_0:=\sum_kp(k)x_k^2-m_0^2.
\]
By Bayes plausibility,
\[
\int_{K_\varepsilon}\pi(k)\,\mu_i(d\pi)=p(k)
\qquad\text{for every }k.
\]
Consequently,
\begin{align*}
\int_{K_\varepsilon}m(\pi)\,\mu_i(d\pi)
&=m_0.
\end{align*}
Since \(\mu_i\) is a probability measure, it follows that
\begin{align*}
\int_{K_\varepsilon}
\bigl(m(\pi)-m_0\bigr)^2\,\mu_i(d\pi)
&=
\int_{K_\varepsilon}m(\pi)^2\,\mu_i(d\pi)
-2m_0\int_{K_\varepsilon}m(\pi)\,\mu_i(d\pi)
+m_0^2\\
&=
\int_{K_\varepsilon}m(\pi)^2\,\mu_i(d\pi)
-m_0^2.
\end{align*}
Therefore,
\begin{align*}
R_i(\mu_i)
&=
\int_{K_\varepsilon}
\operatorname{Var}_\pi(X)\,\mu_i(d\pi)\\
&=
\int_{K_\varepsilon}
\left[
\sum_k\pi(k)x_k^2-m(\pi)^2
\right]\mu_i(d\pi)\\
&=
\sum_kx_k^2
\int_{K_\varepsilon}\pi(k)\,\mu_i(d\pi)
-
\int_{K_\varepsilon}m(\pi)^2\,\mu_i(d\pi)\\
&=
\sum_kp(k)x_k^2
-
\int_{K_\varepsilon}m(\pi)^2\,\mu_i(d\pi)\\
&=
\sum_kp(k)x_k^2-m_0^2
-
\left[
\int_{K_\varepsilon}m(\pi)^2\,\mu_i(d\pi)-m_0^2
\right]\\
&=
V_0
-
\int_{K_\varepsilon}
\bigl(m(\pi)-m_0\bigr)^2\,\mu_i(d\pi)\\
&\leq V_0.
\end{align*}
Since all opponents remain uninformative, conditional on
\(\Pi_i=\pi\), 
\begin{align*}
\mathbb E_\mu&[y^\Gamma(\pi,(p_\mathcal{E})_{-i};\eta)-X)^2\mid\Pi_i=\pi] \\
&=
\operatorname{Var}_{\pi}(X)
+\bigl(y^\Gamma(\pi,(p_\mathcal{E})_{-i};\eta)-\mathbb E_\pi[X]\bigr)^2
\geq \operatorname{Var}_{\pi}(X).
\end{align*}
Integrating with respect to \(\mu_i\) gives
\(L(\mu_i,\mu_{-i}^0)\geq R_i(\mu_i)\).

The law of total variance gives
\[
V_0-R_i(\mu_i)
=\int_{K_\varepsilon}
\bigl(\mathbb E_\pi[X]-m_0\bigr)^2\,\mu_i(d\pi).
\]
Using total variation and Pinsker’s inequality, with natural logarithms,
\[
\bigl(\mathbb E_\pi[X]-m_0\bigr)^2
\leq \Delta_x^2\|\pi-p\|_{\mathrm{TV}}^2
\leq \frac{\Delta_x^2}{2}D_{\mathrm{KL}}(\pi\|p).
\]
The first inequality follows from: 
Write
\[
x_{\min}:=\min_kx_k,
\qquad
x_{\max}:=\max_kx_k,
\qquad
c:=\frac{x_{\max}+x_{\min}}{2}.
\]
Then
\[
|x_k-c|\leq\frac{\Delta_x}{2}
\qquad\text{for every }k.
\]
Using the convention
\[
\|\pi-p\|_{\mathrm{TV}}
:=\frac12\sum_k|\pi(k)-p(k)|,
\]
we obtain
\begin{align*}
\bigl(\mathbb E_\pi[X]-m_0\bigr)^2
&=
\left|
\sum_kx_k\bigl(\pi(k)-p(k)\bigr)
\right|^2\\
&=
\left|
\sum_k(x_k-c)\bigl(\pi(k)-p(k)\bigr)
\right|^2\\
&\leq
\left(
\sum_k|x_k-c|\,|\pi(k)-p(k)|
\right)^2\\
&\leq
\left(
\frac{\Delta_x}{2}
\sum_k|\pi(k)-p(k)|
\right)^2\\
&=
\Delta_x^2\|\pi-p\|_{\mathrm{TV}}^2.
\end{align*}

Set
\[
I_i(\mu_i):=\int_{K_\varepsilon}
D_{\mathrm{KL}}(\pi\|p)\,\mu_i(d\pi).
\]
Since \(C(\mu_i,\mu_{-i}^0)\geq1\), the deviation gain satisfies
\[
\begin{aligned}
&U_i^\lambda(\mu_i,\mu_{-i}^0;\Gamma)
-U_i^\lambda(\mu^0;\Gamma)\\
&\quad\leq
\left(\frac{r_i\Delta_x^2}{2}-\kappa_i\right)I_i(\mu_i)
-\lambda\mathbf1_{\{i=h\}}
\bigl[C(\mu_i,\mu_{-i}^0)-1\bigr]
\leq0.
\end{aligned}
\]
This holds for every player, including the mediator, and every admissible deviation. Thus \(\mu^0\) is a Nash equilibrium in $\mathcal G^\lambda(\Gamma,\gamma)$. 
\end{proof}

\section{Proof of Proposition \ref{prop:strong-duality}}
\begin{proof}
Write $K=K_\varepsilon$, $p=p_{\mathcal E}$, and
$v(\pi)=\operatorname{Var}_\pi(X)$.
Summing the posterior-mean constraints gives
\[
\mu_i(K)
=\int_K\sum_{k=1}^{\tau}\pi(k)\,\mu_i(d\pi)
=\sum_{k=1}^{\tau}p_k=1.
\]
Thus the feasible set of $(P_i)$ is
\[
\mathcal F_i(z_i)
:=\left\{\mu_i\in\Delta(K):
\int_K\pi\,d\mu_i=p,\quad
\int_Kv\,d\mu_i=z_i\right\}.
\]
It is nonempty because $z_i$ is attainable.

Since $K$ is compact metric, $\Delta(K)$ is compact and metrizable in the weak$^\ast$ topology.
The moment functions $\pi(k)$ and $v$ are continuous, so their integrals are weak$^\ast$ continuous.
Hence $\mathcal F_i(z_i)$ is a closed subset of $\Delta(K)$ and is therefore compact and metrizable.
Compactness is equivalent to sequential compactness in metric spaces \citep[Theorem~3.28]{aliprantis2006infinite}.
Consequently, every sequence in $\mathcal F_i(z_i)$ has a weak$^\ast$ convergent subsequence with its limit
in $\mathcal F_i(z_i)$.

Let $s:=\operatorname{val}(P_i)$, which is finite because $f_i$ is bounded and the feasible set is nonempty.
Choose a maximizing sequence $(\mu_i^m)$ such that
\[
\int_K f_i\,d\mu_i^m\longrightarrow s.
\]
By sequential compactness, a subsequence satisfies
$\mu_i^{m_\ell}\xrightarrow{w^*}\mu_i^*
\in\mathcal F_i(z_i)$.
Since $f_i$ is bounded and upper semicontinuous, the Portmanteau inequality yields
\[
s
=\lim_{\ell\to\infty}\int_K f_i\,d\mu_i^{m_\ell}
\leq\int_K f_i\,d\mu_i^*
\leq s.
\]
Thus $\mu_i^*$ attains the primal maximum.

For strong duality, use $\pi(\tau)=1-\sum_{k<\tau}\pi(k)$ to replace the constraints by the equivalent normalized moments
\[
1,\ \pi(1),\ldots,\pi(\tau-1),\ v(\pi),
\]
with targets $(1,p_1,\ldots,p_{\tau-1},z_i)$.
The domain is compact metric, the moment functions are continuous, $f_i$ is upper semicontinuous, and the primal problem is feasible.
Corollary~3.1 of \citet{shapiro2001duality} therefore gives equality of the primal and dual values for this normalized moment problem.

Its dual majorant has the form $c_0+\sum_{k<\tau}c_k\pi(k)+\beta v(\pi)$.
The invertible substitution
\[
c_0=\alpha_\tau,\qquad
c_k=\alpha_k-\alpha_\tau\quad(k<\tau)
\]
transforms this majorant into $\alpha\cdot\pi+\beta v(\pi)$ and its objective into $\alpha\cdot p+\beta z_i$.
Thus the normalized dual is exactly $(D_i)$, proving $\operatorname{val}(P_i)=\operatorname{val}(D_i)$.
\end{proof}
\section{Certified interval and global payoff bounds}\label{app:certified-interval}
\begin{lemma}[Certified interval and global payoff bounds]
\label{lem:interval-payoff-bound}
Fix a nonmediator $i$, opponents' experiments $\mu_{-i}$, and a retained share $\lambda$. Write
\[
v(\pi):=\operatorname{Var}_\pi(X),
\qquad
g_z(\pi):=
f_i^\lambda(\pi;\mu_{-i},\bar\eta(z)).
\]
Let $I=[\ell,u]\subseteq\mathcal Z_i(p_{\mathcal E})$ be nonempty. Suppose a finite, nonempty family $\mathcal C_{i,I}$ of triples $(\alpha_c,\beta_c,d_c)\in\mathbb R^\tau\times\mathbb R^2$ satisfies
\begin{equation}\label{eq:interval-residual-certificate}
d_c\ge
\sup_{\substack{z\in I\\\pi\in K_\varepsilon}}
\bigl[
g_z(\pi)-\alpha_c\cdot\pi-\beta_cv(\pi)
\bigr],
\qquad c\in\mathcal C_{i,I}.
\end{equation}
Set $a_c:=\alpha_c\cdot p_{\mathcal E}+d_c$ and define
\[
B_i(I)
:=\max_{z\in I}
\min_{c\in\mathcal C_{i,I}}(a_c+\beta_cz).
\]
This maximum is finite and attained, and equivalently
\[
\begin{aligned}
B_i(I)=\max_{z,t\in\mathbb R}\quad&t\\
\text{subject to}\quad&
\ell\le z\le u,\\
&t\le a_c+\beta_cz
\qquad(c\in\mathcal C_{i,I}).
\end{aligned}
\]
Any verified numerical upper bound
$\overline B_i^{\mathrm{num}}(I)\ge B_i(I)$ satisfies
\[
\sup_{z\in I}W_i(z;\mu_{-i})
\le B_i(I)
\le\overline B_i^{\mathrm{num}}(I).
\]

In particular, choose verified weights
$\omega_c\ge0$ with $\sum_c\omega_c=1$, and set
\[
A_\omega:=\sum_c\omega_ca_c,
\qquad
b_\omega:=\sum_c\omega_c\beta_c.
\]
A valid numerical bound is
\[
\overline B_i^{\mathrm{num}}(I)
:=
\operatorname{up}\!
\left(A_\omega+\max\{\ell b_\omega,u b_\omega\}\right),
\]
where $\operatorname{up}$ denotes evaluation with a
verified upper enclosure of all numerical errors.

Finally, if a finite partition $\mathcal P_i$ covers
$\mathcal Z_i(p_{\mathcal E})$, then
\[
W_i^\star(\mu_{-i})
\le
\overline W_i^{\mathrm{num}}
:=
\max_{I\in\mathcal P_i}
\overline B_i^{\mathrm{num}}(I).
\]
\end{lemma}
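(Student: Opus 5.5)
The argument is weak duality applied pointwise in $z$, followed by an elementary analysis of the piecewise-linear concave envelope. Fix $z\in I$ and any $\mu_i$ feasible for $(P_i)$ at level $z$. For every certificate $c\in\mathcal C_{i,I}$, the residual bound \eqref{eq:interval-residual-certificate} gives, for all $\pi\in K_\varepsilon$,
\[
g_z(\pi)\le\alpha_c\cdot\pi+\beta_cv(\pi)+d_c .
\]
Integrating against $\mu_i$ and using the two moment constraints (Bayes plausibility and residual risk $z$), together with $\mu_i(K_\varepsilon)=1$ (shown in the proof of Proposition~\ref{prop:strong-duality}), yields
\[
\int g_z\,d\mu_i\le\alpha_c\cdot p_{\mathcal E}+\beta_cz+d_c=a_c+\beta_cz .
\]
Taking the supremum over feasible $\mu_i$ and then the minimum over $c$ gives $W_i(z;\mu_{-i})\le\min_c(a_c+\beta_cz)$. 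Taking the supremum over $z\in I$ then gives $\sup_{z\in I}W_i\le B_i(I)$, and the second inequality holds by hypothesis. Strong duality is not needed here; only the weak direction is used. One caveat: this needs $g_z$ to be measurable and bounded so the integrals make sense, which follows from the upper semicontinuity and boundedness in Lemma~\ref{lem:acquisition-payoff-regularity}.

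For finiteness and attainment, $\phi(z):=\min_c(a_c+\beta_cz)$ is a minimum of finitely many affine functions. It is therefore continuous (indeed concave) on the compact interval $I$, so its maximum exists, is finite, and is attained. The epigraph-style LP reformulation holds because, for fixed $z$, the largest feasible $t$ is exactly $\phi(z)$.

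For the weighted bound, I would note that for any $z\in I$ and weights $\omega$,
\[
\phi(z)\le\sum_c\omega_c(a_c+\beta_cz)=A_\omega+b_\omega z .
\]
The right-hand side is affine in $z$, so its maximum over $[\ell,u]$ is attained at an endpoint and equals $A_\omega+\max\{\ell b_\omega,u b_\omega\}$. A verified upward rounding preserves the inequality, so $\overline B_i^{\mathrm{num}}(I)\ge B_i(I)$.

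For the global statement, since the partition covers $\mathcal Z_i(p_{\mathcal E})$, every attainable $z$ lies in some $I\in\mathcal P_i$. Hence
\[
W_i(z)\le\overline B_i^{\mathrm{num}}(I)\le\max_{I\in\mathcal P_i}\overline B_i^{\mathrm{num}}(I),
\]
and taking the supremum over $z$ bounds $W_i^\star$.

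The only delicate point is ensuring that the certificate inequality holds uniformly in $z\in I$ with fixed $(\alpha_c,\beta_c)$. This uniformity is exactly what the supremum over $z$ in \eqref{eq:interval-residual-certificate} supplies, so I do not expect a substantive obstacle.
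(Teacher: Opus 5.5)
Your proposal is correct and follows the paper's proof essentially step for step: integrate each certificate inequality against a feasible experiment at level $z$ using the moment constraints and unit mass, take the minimum over $c$ and then the maximum over $z$, get attainment from continuity of a finite minimum of affine functions on a compact interval, dominate that minimum by the $\omega$-weighted affine function (maximized at an endpoint), and conclude via the covering partition. Your remark on measurability of $g_z$ is a harmless refinement the paper leaves implicit; for a nonmediator $g_z$ is in fact continuous by Lemma~\ref{lem:acquisition-payoff-regularity}.
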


\begin{proof}
Fix $z\in I$ and an admissible experiment $\nu_i$ with $R_i(\nu_i)=z$.
For every $c\in\mathcal C_{i,I}$,
\eqref{eq:interval-residual-certificate} and the moment constraints give
\[
\begin{aligned}
\int_{K_\varepsilon}g_z(\pi)\,\nu_i(d\pi)
&\le
\int_{K_\varepsilon}
\bigl[\alpha_c\cdot\pi+\beta_cv(\pi)+d_c\bigr]
\,\nu_i(d\pi)\\
&=\alpha_c\cdot p_{\mathcal E}+\beta_cz+d_c\\
&=a_c+\beta_cz.
\end{aligned}
\]
Here $\nu_i(K_\varepsilon)=1$.
Taking the supremum over such experiments yields
\[
W_i(z;\mu_{-i})
\le
\min_{c\in\mathcal C_{i,I}}(a_c+\beta_cz).
\]
Maximizing over $z\in I$ gives the interval bound.

For each fixed $z$, the largest feasible $t$ in the displayed LP is precisely $\min_c(a_c+\beta_cz)$, proving the LP representation.
This minimum is a finite continuous function of $z$, so its maximum on the compact interval $I$ is attained.

For any admissible weights $\omega$,
\[
\min_c(a_c+\beta_cz)
\le
\sum_c\omega_c(a_c+\beta_cz)
=A_\omega+b_\omega z.
\]
Therefore
\[
B_i(I)
\le A_\omega+\max\{\ell b_\omega,u b_\omega\}
\le\overline B_i^{\mathrm{num}}(I).
\]
Taking the maximum over the covering partition proves the global bound.
\end{proof}
\end{appendices}
\end{document}